
\documentclass[12pt]{amsproc}
\usepackage{amsmath,amscd}

\usepackage{epsfig}

\usepackage{amssymb}

\usepackage{latexsym}


\newcommand{\plaq}[5]{

\setlength{\unitlength}{.5in}

\begin{picture}(3.25,2)(0.5,1)

\put(.9,2){$\blacktriangle$}

\put(0.5,2){$#4$}

\put(1,1){\line(0,1){2}}

\put(1,1){\line(1,0){2}}

\put(2,.9){$\blacktriangleright$ }

\put(1.9,0.7){$#1$}

\put(1,3){\line(1,0){2}}

\put(2,2.9){$\blacktriangleright$ }

\put(1.9,3.15){$#3$}

\put(2.9,2){$\blacktriangle$}

\put(3.15,2){$#2$}

\put(3,1){\line(0,1){2}}

\put(1.9,2){$#5$}

\end{picture}
}

\usepackage{hyperref} 

\newtheorem{theorem}{Theorem}[section]
\newtheorem{prop}{Proposition}[section]
\newtheorem{lemma}[theorem]{Lemma}

\newtheorem{definition}[theorem]{Definition}

\theoremstyle{remark}

\numberwithin{equation}{section}


\newcommand{\mpm}{{\mathcal P}M}

\newcommand{\oab}{\omega_{(A,B)}}

\newcommand{\ova}{{\overline{a}}}

\newcommand{\ovA}{{\overline{A}}}

\newcommand{{\tlg}}{\tilde\gamma }

\newcommand{{\tlG}}{\tilde\Gamma }

\newcommand{\pap}{{\mathcal P}_{\ovA}P}


\begin{document}

\title{Parallel Transport  over Path Spaces}

\author{Saikat Chatterjee} \author{Amitabha Lahiri}
\address{{\rm Saikat Chatterjee and Amitabha Lahiri}\\
S.~N.~Bose National Centre for Basic Sciences \\ Block JD,
  Sector III, Salt Lake, Kolkata 700098 \\
  West Bengal, INDIA}
\email{saikat@bose.res.in, amitabha@bose.res.in}

\author{Ambar N. Sengupta}
\address{{\rm Ambar Sengupta}\\ Department of Mathematics,
  Louisiana State University\\  Baton
Rouge, Louisiana 70803\\ USA} \email{sengupta@gmail.com}
\urladdr{http://www.math.lsu.edu/$\sim$sengupta}

\date{15th June, 2010}

 

\subjclass[2010]{Primary 81T13; Secondary: 58Z05, 16E45 }

\keywords{Gauge Theory, Higher Gauge Theories, Path Spaces,
  Differential Forms}

\begin{abstract} We develop a differential geometric framework for
  parallel transport over path spaces and a corresponding discrete
  theory, an integrated version of the continuum theory, using a
  category-theoretic framework.

\end{abstract}

\maketitle

\section{Introduction}\label{intro}

A considerable body of literature has grown up around the notion of
`surface holonomy', or parallel transport on surfaces, motivated by
the need to have a gauge theory of interaction between charged
string-like objects. Approaches include direct geometric
exploration of the space of paths of a manifold (Cattaneo et al.
\cite{Cat}, for instance), and a very different, category-theory
flavored development (Baez and Schreiber \cite{BS}, for instance).
In the present work we develop both a path-space geometric theory
as well as a category theoretic approach to surface holonomy, and
describe some of the relationships between the two.

As is well known \cite{Baez} from a group-theoretic argument and
also from the fact that there is no canonical ordering of points on
a surface, attempts to construct a group-valued parallel transport
operator for surfaces leads to inconsistencies unless the group is
abelian (or an abelian representation is used). So in our setting,
there are \textit{two} interconnected gauge groups $G$ and $H$. We
work with a fixed principal $G$-bundle $\pi:P\to M$ and connection
$\ovA$; then, viewing the space of $\ovA$-horizontal paths itself
as a bundle over the path space of $M$, we study a particular
type of connection on this path-space bundle which is specified by
means of a second connection $A$ and a field $B$ whose values are
in the Lie algebra $LH$ of $H$. We derive explicit formulas
describing parallel-transport with respect to this connection.  As
far as we are aware, this is the first time an explicit description
for the parallel transport operator has been obtained for a surface
swept out by a path whose endpoints are not pinned. We obtain, in
Theorem \ref{T:reparm}, conditions for the parallel-transport of a
given point in path-space to be independent of the parametrization
of that point, viewed as a path.  We also discuss $H$-valued
connections on the path space of $M$, constructed from the field
$B$. In section \ref{S:CatPlaq} we show how the geometrical data,
including the field $B$, lead to two categories. We prove several
results for these categories and discuss how these categories may
be viewed as `integrated' versions of the differential geometric
theory developed in section \ref{S:Parallel}

In working with spaces of paths one is confronted with the problem
of specifying a differential structure on such spaces. It appears
best to proceed within a simpler formalism. Essentially, one
continues to use terms such as `tangent space' and `differential
form', except that in each case the specific notion is defined
directly (for example, a tangent vector to a space of paths at a
particular path $\gamma$ is a vector field along $\gamma$) rather
than by appeal to a general theory. Indeed, there is a good variety
of choices for general frameworks in this philosophy (see, for
instance, Stacey \cite{Stacey} and Viro \cite{Viro}).  For this
reason we shall make no attempt to build a manifold structure on
any space of paths.

\vskip .25in

\underline{\em Background and Motivation}

Let us briefly discuss the physical background and motivation for
this study. Traditional gauge fields govern interaction between
point particles. Such a gauge field is, mathematically, a
connection $A$ on a bundle over spacetime, with the structure group
of the bundle being the relevant internal symmetry group of the
particle species. The amplitude of the interaction, along some path
$\gamma$ connecting the point particles, is often obtained from the
particle wave functions $\psi$ coupled together using quantities
involving the path-ordered exponential integral ${\mathcal
  P}\exp(-\int_\gamma\ovA)$, which is the same as the
parallel-transport along the path $\gamma$ by the connection
$\ovA$.  If we now change our point of view concerning particles,
and assume that they are extended string-like entities, then each
particle should be viewed not as a point entity but rather a path
(segment) in spacetime. Thus, instead of the two particles located
at two points, we now have two paths $\gamma_1$ and $\gamma_2$; in
place of a path connecting the two point particles we now have a
parametrized path of paths, in other words a surface $\Gamma$,
connecting $\gamma_1$ with $\gamma_2$. The interaction amplitudes
would, one may expect, involve both the gauge field $A$, as
expressed through the parallel transports along $\gamma_1$ and
$\gamma_2$, and an interaction between these two parallel transport
fields. This higher order, or higher dimensional interaction, could
be described by means of a gauge field at the higher level: it
would be a gauge field over the space of paths in spacetime.

  \begin{figure}
   \begin{center}
    \includegraphics[width=2.22in]{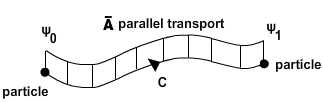}
    
\end{center}
\caption{Point particles interacting via a gauge field}
    \label{fig:pointinteract}
   \end{figure}
   
   \medskip
 
\underline{\em Comparison with other works}

\smallskip

The approach to higher gauge theory developed and explored by Baez
\cite{Baez}, Baez and Schreiber \cite{BS,BS2}, and Lahiri
\cite{Lahiri}, and others cited in these papers, involves an
abstract category theoretic framework of 2-connections and
2-bundles, which are higher-dimensional analogs of bundles and
connections. There is also the framework of gerbes (Chatterjee
\cite{Chat}, Breen and Messing \cite{Breen:2001ie}, Murray
\cite{Murr}).

We develop both a differential geometric framework and
category-theoretic structures. We prove in Theorem \ref{T:reparm}
that a requirement of parametrization invariance imposes a
constraint on a quantity called the `fake curvature' which has been
observed in a related but more abstract context by Baez and
Schreiber \cite[Theorem 23]{BS}.  Our differential geometric
approach is close to the works of Cattaneo et al. \cite{Cat},
Pfeiffer \cite{Pfeiffer}, and Girelli and Pfeiffer \cite{GP}.
However, we develop, in addition to the differential geometric
aspects, the integrated version in terms of categories of diagrams,
an aspect not addressed in \cite{Cat}; also, it should be noted
that our connection form is different from the one used in
\cite{Cat}.  To link up with the integrated theory it is essential
to explore the effect of the $LH$-valued field $B$. To this end we
determine a `bi-holonomy' associated to a path of paths (Theorem
\ref{T:gb}) in terms of the field $B$; this aspect of the theory is
not studied in \cite{Cat} or other works.

Our approach has the following special features:
 \begin{itemize}
 \item we develop the theory with two connections $A$ and $\ovA$ as
   well as a $2$-form $B$ (with the connection $\ovA$ used for
   parallel-transport along any given string-like object, and the
   forms $A$ and $B$ used to construct parallel-transports between
   different strings);
   
 \item we determine, in Theorem \ref{T:gb}, the `bi-holonomy'
   associated to a path of paths using the $B$-field;
\item we allow `quadrilaterals' rather than simply bigons in the
  category theoretic formulation, corresponding to having strings
  with endpoints free to move rather than fixed-endpoint strings.
\end{itemize}
Our category theoretic considerations are related to notions about
double categories introduced by Ehresmann \cite{Ehr1,Ehr2} and
explored further by Kelly and Street \cite{KS}.
   
  \begin{figure}{ }
 \begin{center}
    \includegraphics[width=2.22in]{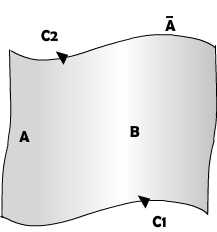}
    
    \caption{Gauge fields along paths $c_1$ and $c_2$ interacting
      across a surface}
    \label{fig:c1c2interact}

 \end{center}
   \end{figure}
 
\section{Connections on Path-space  Bundles}\label{S:Parallel}

In this section we will construct connections and
parallel-transport for a pair of intertwined structures: path-space
bundles with structure groups $G$ and $H$, which are Lie groups
intertwined as explained below in (\ref{pfid}).  For the physical
motivation, it should be kept in mind that $G$ denotes the gauge
group for the gauge field along each path, or string, while $H$
governs, along with $G$, the interaction between the gauge fields
along different paths.

An important distinction between existing differential geometric
approaches (such as Cattaneo et al. \cite{Cat}) and the `integrated
theory' encoded in the category-theoretic framework is that the
latter necessarily involves two gauge groups: a group $G$ for
parallel transport along paths, and another group $H$ for parallel
transport between paths (in path space). We shall develop the
differential geometric framework using a pair of groups $(G,H)$ so
as to be consistent with the `integrated' theory.  Along with the
groups $G$ and $H$, we use a fixed smooth homomorphism $\tau:H\to
G$ and a smooth map
$$G\times H\to
H:(g,h)\mapsto\alpha(g)h$$ such that each $\alpha(g)$ is an
automorphism of $H$, such that the identities
\begin{equation}\label{pfid}
\begin{split}
  \tau\bigl(\alpha(g)h\bigr) &=g\tau(h)g^{-1}\\
  \alpha\bigl(\tau(h)\bigr)h' &=hh'h^{-1}\end{split}
\end{equation}
hold for all $g\in G$ and $h,h'\in H$. The derivatives $\tau'(e)$
and $\alpha'(e)$ will be denoted simply as $ \tau:LH\to LG$ and $
\alpha:LG\to LH$. (This structure is called a {\em Lie 2-group} in
\cite{Baez,BS}).

To summarize very rapidly, anticipating some of the notions
explained below, we work with a principal $G$-bundle $\pi:P\to M$
over a manifold $M$, equipped with connections $A$ and $\ovA$, and
an $\alpha$-equivariant vertical $2$-form $B$ on $P$ with values in
the Lie algebra $LH$. We then consider the space $\pap$ of
$\ovA$-horizontal paths in $P$, which forms a principal $G$-bundle
over the path-space ${\mathcal P}M$ in $M$. Then there is an
associated vector bundle $E$ over ${\mathcal P}M$ with fiber $LH$;
using the $2$-form $B$ and the connection form $\ovA$ we construct,
for any section $\sigma$ of the bundle $P\to M$, an $LH$-valued
$1$-form $\theta^{\sigma}$ on ${\mathcal P}M$.  This being a
connection over the path-space in $M$ with structure group $H$,
parallel-transport by this connection associates elements of $H$ to
{\em parametrized surfaces} in $M$.  Most of our work is devoted to
studying a second connection form $\oab$, which is a connection on
the bundle $\pap$ which we construct using a second connection $A$
on $P$.  Parallel-transport by $\oab$ is related to
parallel-transport by the $LH$-valued connection form
$\theta^{\sigma}$.

\bigskip


\underline{\em Principal bundle and the connection $\ovA$}

\smallskip

Consider a principal $G$-bundle
$$\pi:P\to M$$ with the right-action of the Lie group $G$ on $P$
denoted
$$P\times G\to P: (p,g)\mapsto pg=R_gp.$$
Let $\ovA$ be a connection on this bundle. The space ${\mathcal
  P}_{\ovA}P$ of $\ovA$-horizontal paths in $P$ may be viewed as a
principal $G$-bundle over ${\mathcal P}M$, the space of smooth
paths in $M$.

We will use the notation $pK\in T_pP$, for any point $p\in P$ and
Lie-algebra element $K\in LG$, defined by
$$pK=\frac{d}{dt}\Big|_{t=0}p\cdot\exp(tK).$$
It will be convenient to keep in mind that we always use $t$ to
denote the parameter for a path on the base manifold $M$ or in the
bundle space $P$; we use the letter $s$ to parametrize a path in
path-space.

 \vfill\eject

\noindent\underline{\em The tangent space to ${\mathcal
    P}_{\ovA}P$}  
\smallskip

The points of the space ${\mathcal P}_{\ovA}P$ are
$\ovA$-horizontal paths in $P$. Although we call ${\mathcal
  P}_{\ovA}P$ a `space' we do not discuss any topology or manifold
structure on it. However, it is useful to introduce certain
differential geometric notions such as tangent spaces on ${\mathcal
  P}_{\ovA}P$. It is intuitively clear that a tangent vector at a
`point' ${\tilde\gamma} \in {\mathcal P}_{\ovA}P$ ought to be a
vector field on the path ${\tilde\gamma}$.  We formalize this idea
here (as has been done elsewhere as well, such as in Cattaneo et
al. \cite{Cat}).

If ${\mathcal P}X$ is a space of paths on a manifold $X$, we denote
by ${\rm ev}_t$ the evaluation map
\begin{equation}\label{E:defevt}
{\rm ev}_t:{\mathcal P}X\to X: \gamma\mapsto {\rm
  ev}_t(\gamma)=\gamma(t). 
\end{equation}

Our first step is to understand the tangent spaces to the bundle
${\mathcal P}_{\ovA}P$. The following result is preparation for the
definition (see also \cite[Theorem 2.1]{Cat}).

\begin{prop}\label{P:tngt} Let $\ovA$ be a connection on a
  principal 
  $G$-bundle $\pi:P\to M$, and
$${{\tlG}}:[0,1]\times[0,1]\to
P:(t,s)\mapsto{\tlG}(t,s)={\tlG}_s(t)$$ a 
smooth map, and
 $${\tilde v}_s(t)=\partial_s{\tlG}(t,s).$$
 Then the following are equivalent: \begin{itemize}
\item[\rm (i)] Each transverse  path 
$${{\tlG}}_s:[0,1]\to P:
t\mapsto {\tlG}(t,s)$$ is $\ovA$-horizontal.
\item[\rm (ii)] The initial path ${\tlG}_0$ is $\ovA$-horizontal,
and the `tangency condition'
\begin{equation}\label{tngtcond}
\begin{frac}{\partial\ovA({\tilde
v}_s(t))}{\partial t}\end{frac}=
F^{\ovA}\left({\partial_t}{\tilde\Gamma}(t,s), 
{\tilde
v}_s(t)\right)
\end{equation}
holds, and thus also
\begin{equation}\label{duh}
\ovA\bigl({\tilde v}_s(T)\bigr) - \ovA\bigl({\tilde v}_s(0) \bigr)
= \int_0^TF^{\ovA}\left({\partial_t}{\tilde\Gamma}(t,s),
{\tilde v}_s(t)\right)\, dt,
\end{equation}
for every $T,s\in [0,1]$.
\end{itemize}

\end{prop}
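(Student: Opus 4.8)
The plan is to show the chain (i) $\Rightarrow$ (ii) $\Rightarrow$ (i), extracting (\ref{duh}) as an immediate consequence of (\ref{tngtcond}). The essential tool is the structure equation for the curvature, $F^{\ovA} = d\ovA + \frac12[\ovA,\ovA]$, evaluated on the coordinate vector fields $\partial_t{\tlG}$ and $\tilde v_s = \partial_s{\tlG}$ of the smooth map ${\tlG}$. Because these two vector fields are pushforwards of the coordinate fields $\partial_t,\partial_s$ on $[0,1]\times[0,1]$, they commute, so $[\partial_t{\tlG},\tilde v_s]=0$ on $P$; hence the standard formula for the exterior derivative gives
\begin{equation}\label{E:dA}
(d\ovA)\bigl(\partial_t{\tlG},\tilde v_s\bigr)
= \partial_t\bigl(\ovA(\tilde v_s)\bigr) - \partial_s\bigl(\ovA(\partial_t{\tlG})\bigr).
\end{equation}
I would first record this identity and the curvature structure equation as the computational backbone.

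For (i) $\Rightarrow$ (ii): assuming every transverse path ${\tlG}_s$ is $\ovA$-horizontal means $\ovA(\partial_t{\tlG}(t,s)) = 0$ for all $t,s$; in particular ${\tlG}_0$ is horizontal. Then the second term on the right of (\ref{E:dA}) vanishes identically, and the $\frac12[\ovA,\ovA]$ term in $F^{\ovA}(\partial_t{\tlG},\tilde v_s) = [\ovA(\partial_t{\tlG}),\ovA(\tilde v_s)]$ also vanishes since $\ovA(\partial_t{\tlG})=0$. What remains is exactly $\partial_t(\ovA(\tilde v_s)) = F^{\ovA}(\partial_t{\tlG},\tilde v_s)$, which is (\ref{tngtcond}). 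Integrating this in $t$ from $0$ to $T$ and using the fundamental theorem of calculus gives (\ref{duh}).

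For (ii) $\Rightarrow$ (i): here I would fix $t$ and integrate (\ref{tngtcond}) in $s$ rather than $t$; equivalently, introduce $f(t,s) = \ovA(\partial_t{\tlG}(t,s))$ and show $f\equiv 0$. We know $f(t,0)=0$ for all $t$ because ${\tlG}_0$ is horizontal. Rewriting (\ref{E:dA}) together with the structure equation, the tangency condition (\ref{tngtcond}) says precisely that $\partial_s f(t,s) = [\,f(t,s),\,\ovA(\tilde v_s(t))\,]$ — a linear first-order ODE in $s$ for the $LG$-valued function $s\mapsto f(t,s)$, with zero initial value at $s=0$. By uniqueness of solutions to linear ODEs, $f(t,s)=0$ for all $s$, i.e. each ${\tlG}_s$ is horizontal. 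The main obstacle — really the only subtlety — is getting the bracket term in this ODE correct: one must carefully expand $F^{\ovA}(\partial_t{\tlG},\tilde v_s)$ via the structure equation, cancel the $d\ovA$ piece against the $\partial_t(\ovA(\tilde v_s))$ term coming from (\ref{E:dA}), and verify that the surviving terms assemble into $[f,\ovA(\tilde v_s)]$ with the right sign. Once that identity is in hand, both implications follow from elementary ODE facts.
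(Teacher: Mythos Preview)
Your proposal is correct and follows essentially the same route as the paper: both derive the key identity $\partial_t\bigl(\ovA(\tilde v_s)\bigr) - F^{\ovA}(\partial_t\tlG,\tilde v_s) = \partial_s f - [f,\ovA(\tilde v_s)]$ with $f=\ovA(\partial_t\tlG)$, and use it in both directions. The only cosmetic difference is in (ii)$\Rightarrow$(i): the paper conjugates $f$ by the parallel transport $h(s)$ along $s\mapsto\tlG(t,s)$ to exhibit $h^{-1}fh$ as constant, whereas you invoke uniqueness for the linear ODE $\partial_s f=[f,\ovA(\tilde v_s)]$ directly --- these are the same argument in different clothing.
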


Equation (\ref{tngtcond}), and variations on it, is sometimes
referred to as the Duhamel formula and sometimes a `non-abelian
Stokes formula.' We can write it more compactly by using the notion
of a Chen integral. With suitable regularity assumptions, a
$2$-form $\Theta$ on a space $X$ yields a $1$-form, denoted
$\int\Theta$, on the space ${\mathcal P}X$ of smooth paths in $X$;
if $c$ is such a path, a `tangent vector' $v\in T_c({\mathcal P}X)$
is a vector field $t\mapsto v(t)$ along $c$, and the evaluation of
the $1$-form $\int\Theta$ on $v$ is defined to be
\begin{equation}\label{E:chentheta}
\left(\int\Theta\right)_cv=\left(\int_c\Theta\right)(v)=\int_0^1\Theta
\bigl(c'(t),v(t)\bigr)\,dt. 
\end{equation}
The $1$-form $\int\Theta$, or its localization to the tangent space 
$T_c({\mathcal P}X)$, is called the Chen integral of $\Theta$.
Returning to our context, we then have
\begin{equation}\label{nonabstokeschen}
{\rm ev}_T^*\ovA- {\rm ev}_0^*\ovA
=\int_0^TF^{\ovA},
\end{equation}
where the integral on the right is a Chen integral; here it is, by
definition, the $1$-form on $\pap$ whose value on a vector ${\tilde
  v}_s\in T_{\tlG_s}\pap$ is given by the right side of
(\ref{tngtcond}). The pullback ${\rm ev}_t^*\ovA$ has the obvious
meaning.

\begin{proof}  From the definition of the curvature form
  $F^{\ovA}$, 
we have
$$F^{\ovA}\bigl(\partial_t{\tlG},
\partial_s{\tlG})=\partial_t\bigl(\ovA(\partial_s{\tlG})\bigr)
-\partial_s\bigl(\ovA(\partial_t{\tlG})\bigr)-
\ovA\bigl(\underbrace{[\partial_t{\tlG},\partial_s
  {\tlG}]}_{0}\bigr)+\left[\ovA(\partial_t{\tlG}),
  \ovA(\partial_s{\tlG})\right].
$$
So
\begin{equation}\label{dtds}\begin{split}
\partial_t\bigl(\ovA(\partial_s{\tlG})\bigr)-
F^{\ovA}(\partial_t{\tlG},\partial_s{\tlG})&=
\partial_s\bigl(\ovA(\partial_t{\tlG})\bigr)
-\bigl[\ovA(\partial_t{\tlG}),\ovA(\partial_s{\tlG})
\bigr]\\
&= 0\qquad\hbox{if $\ovA(\partial_t{\tlG})=0$,}
\end{split}
\end{equation}
thus proving  (\ref{tngtcond}) if (i) holds. The equation
(\ref{duh}) then follows by integration.

Next suppose (ii) holds. Then, from the first line in (\ref{dtds}), 
we have
\begin{equation}\label{pts}
\partial_s\bigl(\ovA(\partial_t{\tlG})\bigr)-
\bigl[\ovA(\partial_t{\tlG}),\ovA(\partial_s{\tlG})\bigr]=0.
\end{equation}
Now let $s\mapsto h(s)\in G$ describe parallel-transport along
$s\mapsto{\tlG}(s,t)$; then
$$h'(s)h(s)^{-1}=-
\ovA\bigl(\partial_s{\tlG}(s,t)\bigr),\quad\hbox{and $h(0)=e$.}$$ 
Then
\begin{equation}\begin{split}
\partial_s\left(h(s)^{-1}\ovA\bigl(\partial_t{\tlG}(t,s)
  \bigr)h(s)\right)  
&\\
&\hskip -1in={\rm Ad}\bigl(h(s)^{-1}\bigr)\left[
\partial_s\bigl(\ovA(\partial_t{\tlG})\bigr)
-\bigl[\ovA(\partial_t{\tlG}), \ovA(\partial_s{\tlG})\right] 
\end{split}\end{equation}
and the right side here is $0$, as seen in (\ref{pts}). Therefore,
$$h(s)^{-1}\ovA\bigl(\partial_t{\tlG}(t,s)\bigr)h(s)$$
is independent of $s$, and hence is equal to its value at $s=0$.
Thus, if $\ovA$ vanishes on $\partial_t{\tlG}(t,0)$ then it also
vanishes in $\partial_t{\tlG}(t,s)$ for all $s\in [0,1]$. In
conclusion, if the initial path ${\tlG}_0$ is $\ovA$-horizontal,
and the tangency condition (\ref{tngtcond}) holds, then each
transverse path ${\tlG}_s$ is $\ovA$-horizontal.
\end{proof}

In view of the preceding result, it is natural to define the
tangent spaces to $ {\mathcal P}_{\ovA}P$ as follows:

\begin{definition}
  The tangent space to $ {\mathcal P}_{\ovA}P$ at ${\tilde\gamma}$ 
  is the linear space of all vector fields $t\mapsto {\tilde
    v}(t)\in T_{{\tilde\gamma}(t)}P$ along $\tilde\gamma$ for which 
\begin{equation}\label{def:vTPAP}
\setlength{\fboxrule}{0.5pt}\framebox{\parbox{2in}{$
\begin{frac}{\partial\ovA({\tilde
v}(t))}{\partial t}\end{frac}- F^{\ovA}\left({\tilde\gamma}'(t),
{\tilde v}(t)\right)=0$}}
\end{equation}
holds for all $t\in [0,1]$. 
\end{definition}

The {\em vertical subspace} in $T_{{\tilde\gamma}}{\mathcal
  P}_{\ovA}P$ consists of all vectors ${\tilde v}(\cdot)$ for which  
${\tilde v}(t)$ is vertical in $T_{{\tilde\gamma}(t)}P$ for every
$t\in [0,1]$.

Let us note one consequence:
\begin{lemma}\label{l:lift}  Suppose $\gamma:[0,1]\to M$ is a
  smooth path, and ${\tilde\gamma}$ an $\ovA$-horizontal lift. Let
  $v:[0,1]\to TM$ be a vector field along $\gamma$, and ${\tilde
    v}(0)$ any vector in $ T_{{\tilde\gamma}(0)}P$ with
  $\pi_*{\tilde v}(0)=v(0)$. Then there is a unique vector field
  ${\tilde v}\in T_{{\tilde\gamma}}{\mathcal P}_{\ovA}P$ whose
  projection down to $M$ is the vector field $v$, and whose initial
  value is ${\tilde v}(0)$.
\end{lemma}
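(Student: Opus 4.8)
The plan is to read the defining equation (\ref{def:vTPAP}) as a first-order linear ODE in the variable $t$ for the $LG$-valued function $t\mapsto\ovA(\tilde v(t))$, and to build $\tilde v$ from two ingredients: a vertical part determined by this ODE, and a horizontal part determined by the requirement that the projection be $v$. First I would fix, at each $t\in[0,1]$, the unique $\ovA$-horizontal vector $\tilde v^{\rm hor}(t)\in T_{\tilde\gamma(t)}P$ with $\pi_*\tilde v^{\rm hor}(t)=v(t)$; this is the ordinary horizontal lift of the vector field $v$ along the horizontal path $\tilde\gamma$, and it is smooth in $t$. Any candidate $\tilde v$ must then be of the form $\tilde v(t)=\tilde v^{\rm hor}(t)+\tilde\gamma(t)\beta(t)$ for a unique $LG$-valued function $\beta$, since the vertical subspace at $\tilde\gamma(t)$ is spanned by the fundamental vectors $\tilde\gamma(t)K$. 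Because $\ovA(\tilde v^{\rm hor}(t))=0$ and $\ovA(\tilde\gamma(t)\beta(t))=\beta(t)$, we have $\ovA(\tilde v(t))=\beta(t)$, so the initial condition $\tilde v(0)=\tilde v^{\rm hor}(0)+\tilde\gamma(0)\ovA(\tilde v(0))$ forces $\beta(0)=\ovA(\tilde v(0))$, which is determined by the given data $\tilde v(0)$.

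Next I would substitute this decomposition into (\ref{def:vTPAP}). The equation becomes
\begin{equation}\label{E:betaODE}
\beta'(t)=F^{\ovA}\bigl(\tilde\gamma'(t),\tilde v^{\rm hor}(t)\bigr)+F^{\ovA}\bigl(\tilde\gamma'(t),\tilde\gamma(t)\beta(t)\bigr).
\end{equation}
The first term on the right is a known smooth $LG$-valued function of $t$; the second is linear in $\beta(t)$, with coefficient a smooth function of $t$ (one may rewrite $F^{\ovA}(\tilde\gamma'(t),\tilde\gamma(t)\beta(t))$ using equivariance of the curvature form, or simply note it is $\mbr$-linear in $\beta(t)$ with smooth $t$-dependence). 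Hence (\ref{E:betaODE}) is an inhomogeneous linear ODE for $\beta$ with smooth coefficients on the compact interval $[0,1]$, and by the standard existence-and-uniqueness theorem for linear ODEs it has a unique global solution with the prescribed initial value $\beta(0)=\ovA(\tilde v(0))$. Setting $\tilde v(t)=\tilde v^{\rm hor}(t)+\tilde\gamma(t)\beta(t)$ then gives a vector field along $\tilde\gamma$ that by construction satisfies (\ref{def:vTPAP}) for all $t$, projects to $v$, and has initial value $\tilde v(0)$; uniqueness follows since any solution must have the above form with $\beta$ solving the same initial value problem.

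The only mildly delicate point — and the step I would expect to require the most care — is verifying that the coefficient function in the linear term of (\ref{E:betaODE}), namely $K\mapsto F^{\ovA}(\tilde\gamma'(t),\tilde\gamma(t)K)$, is genuinely smooth in $t$ and that $\tilde v^{\rm hor}$ is smooth in $t$; both reduce to the smoothness of $\tilde\gamma$, of $v$, of $\ovA$ and of $F^{\ovA}$, together with the smooth dependence of the horizontal lift of a tangent vector on the base point. Everything else is the invocation of linear ODE theory on $[0,1]$, exactly as in the proof of Proposition \ref{P:tngt}. I would therefore structure the write-up as: (1) reduce to finding $\beta$; (2) identify (\ref{E:betaODE}) as a linear ODE; (3) quote existence/uniqueness; (4) reassemble and check the three required properties.
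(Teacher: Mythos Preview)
Your approach is correct and matches the paper's (very terse) proof: split $\tilde v$ into the $\ovA$-horizontal lift of $v$ plus a vertical part $\tilde\gamma(t)\beta(t)$, and solve the resulting first-order equation for $\beta$. One simplification you overlook: the curvature form $F^{\ovA}$ is horizontal (it vanishes whenever either argument is vertical), so the second term on the right of your equation \eqref{E:betaODE} is identically zero; the ODE is just $\beta'(t)=F^{\ovA}\bigl(\tilde\gamma'(t),\tilde v^{\rm hor}(t)\bigr)$, solved by direct integration, and the ``mildly delicate'' smoothness issue you flag for the linear coefficient evaporates.
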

\begin{proof} The first-order differential equation
  (\ref{def:vTPAP}) determines the vertical part of ${\tilde
    v}(t)$, from the initial value. Thus ${\tilde v}(t)$ is this
  vertical part plus the $\ovA$-horizontal lift of $v(t)$ to
  $T_{{\tilde\gamma}(t)}P$.
\end{proof}

\smallskip

\noindent\underline{\em Connections induced from $B$} 
\smallskip

All through our work, $B$ will denote a vertical
$\alpha$-equivariant $2$-form on $P$ with values in $LH$. In more
detail, this means that $B$ is an $LH$-valued $2$-form on $P$ which
is vertical in the sense that
$$B(u,v)=0\quad\hbox{if $u$ or $v$ is vertical,}$$
and $\alpha$-equivariant in the sense that
$$R_g^*B=\alpha(g^{-1})B\quad\hbox{for all $g\in G$}$$
wherein $R_g:P\to P:p\mapsto pg$ is the right action of $G$ on the
principal bundle space $P$,
and $$\alpha(g^{-1})B=d\alpha(g^{-1})|_eB,$$ recalling that
$\alpha(g^{-1})$ is an automorphism $H\to H$.

Consider an $\ovA$-horizontal ${\tilde\gamma}\in\pap$, and a smooth
vector field $X$ along $\gamma=\pi\circ\tlg$; take any lift
${\tilde X}_{\tlg}$ of $X$ along $\tlg$, and set
\begin{equation}\label{E:defthetatlg}
{\theta}_{\tlg}(X)\stackrel{\rm def}{=}
\left(\int_{\tlg}B\right)({\tilde X}_{\tlg})= 
\int_0^1B\bigl({\tilde\gamma}'(u), {\tilde
  X}_{\tlg}(u)\bigr)\,du.
\end{equation}  
This is independent of the choice of ${\tilde X}_{\tlg}$ (as any
two choices differ by a vertical vector on which $B$ vanishes) and
specifies a linear form ${\theta}_{\tlg}$ on $T_{\gamma}({\mathcal
  P}M)$ with values in $LH$.  If we choose a different horizontal
lift of $\gamma$,  a path ${\tilde\gamma}g$, with $g\in G$,
then
\begin{equation}\label{E:thetag}
{\theta}_{{\tlg}g}(X)=\alpha(g^{-1})\theta_{\tlg}(X).
\end{equation}
Thus, one may view ${\tilde\theta}$ to be a $1$-form on $\mpm$ with
values in the vector bundle $E\to \mpm$ associated to $\pap\to\mpm$
by the action $\alpha$ of $G$ on $LH$.

Now fix a section $\sigma:M\to P$, and for any path $\gamma\in
\mpm$ let ${\tilde\sigma}(\gamma)\in\pap$ be the $\ovA$-horizontal
lift with initial point $\sigma\bigl(\gamma(0)\bigr)$.  Thus,
${\tilde\sigma}:\mpm\to\pap$ is a section of the bundle
$\pap\to\mpm$. Then we have the $1$-form $\theta^{\sigma}$ on
$\mpm$ with values in $LH$ given as follows: for any $X\in
T_{\gamma}(\mpm)$,
\begin{equation}\label{E:defthetsigX}
(\theta^\sigma)(X)=\theta_{{\tilde\sigma}(\gamma)}(X).
\end{equation}
We shall view $\theta^{\sigma}$ as a connection form for the
trivial $H$-bundle over $\mpm$. Of course, it depends on the
section $\sigma$ of $\pap\to\mpm$, but in a `controlled' manner,
i.e., the behavior of $\theta^{\sigma}$ under change of $\sigma$ is
obtained using (\ref{E:thetag}).

\smallskip

\noindent\underline{\em Constructing the connection
  $\omega_{(A,B)}$} 
\smallskip

Our next objective is to construct connection forms on ${\mathcal
  P}_{\ovA}P$. To this end, fix a connection $A$ on $P$, in
addition to the connection $\ovA$ and the $\alpha$-equivariant
vertical $LH$-valued $2$-form $B$ on $P$.

The evaluation map
 at any time $t\in [0,1]$, given by
 $${\rm ev}_t:{\mathcal P}_{\ovA}P\to P: {\tilde\gamma}\mapsto
 {\tilde\gamma}(t),$$ commutes with the projections ${\mathcal
   P}_{\ovA}P\to {\mathcal P}M$ and $P\to M$, and the evaluation
 map ${\mathcal P}M\to M$.  We can pull back any connection $A$ on
 the bundle $P$ to a connection ${\rm ev}_t^*A$ on ${\mathcal
   P}_{\ovA}P$.

 Given a $2$-form $B$ as discussed above, consider the $LH$-valued
 $1$-form $Z$ on ${\mathcal P}_{\ovA}P$ specified as follows. Its
 value on a vector ${\tilde v}\in T_{{\tilde\gamma}}{\mathcal
   P}_{\ovA}P$ is defined to be
\begin{equation}\label{def:Zv}
Z({\tilde v})=  \int_{0}^1 B  \left({\tilde\gamma}'(t), {\tilde
v}(t)\right)\,dt.
\end{equation}
 Thus
\begin{equation}\label{def:ZAB} Z=
 \int_{0}^1 B,
\end{equation}
where on the right we have the Chen integral (discussed earlier in
(\ref{E:chentheta})) of the $2$-form $B$ on $P$, lifting it to an
$LH$-valued $1$-form on the space of ($\ovA$-horizontal) smooth
paths $[0,1]\to P$.  The Chen integral here is, by definition, the
$1$-form on ${\mathcal P}_{\ovA}P$ given by
$$ 
 {\tilde v}\in T_{{\tilde\gamma}}{\mathcal
P}_{\ovA}P\mapsto \int_{0}^1B\left({\tilde\gamma}'(t), {\tilde
v}(t)\right)\,dt.$$
Note that $Z$ and the form $\theta$ are closely related:
\begin{equation}\label{E:Ztidev}
Z({\tilde v})=\theta_{\tlg}(\pi_*{\tilde v}).\end{equation}
 Now define the $1$-form $\omega_{(A,B)}$ by
\begin{equation}\label{def:omegaAB}
\setlength{\fboxrule}{0.5pt}\framebox{\parbox{2.5in}{$
\qquad\omega_{(A,B)}= {\rm ev}_1^*A +{\tau}(Z)$}}
\end{equation}
Recall that $\tau:H\to G$ is a homomorphism, and, for any $X\in
LH$, we are writing $\tau(X)$ to mean $\tau'(e)X$; here
$\tau'(e):LH\to LG$ is the derivative of $\tau$ at the
identity. The utility of bringing in $\tau$ becomes clear only when
connecting these developments to the category theoretic formulation
of section 3. A similar construction, but using only one algebra
$LG\,,$ is described by Cattaneo et al.~\cite{Cat}. However, as we
pointed out earlier, a parallel transport operator for a surface
cannot be constructed using a single group unless the group is
abelian. To allow non-abelian groups, we need to have two groups
intertwined in the structure described in (\ref{pfid}), and thus we
need $\tau$.

Note that $\omega_{(A,B)}$ is simply the connection ${\rm ev}_1^*A$
on the bundle ${\mathcal P}_{\ovA}P$, shifted by the $1$-form
$\tau(Z)$. In the finite-dimensional setting it is a standard fact
that such a shift, by an equivariant form which vanishes on
verticals, produces another connection; however, given that our
setting is, technically, not identical to the finite-dimensional
one, we shall prove this below in Proposition \ref{P:omAB}.

 Thus,
\begin{equation}\label{def:omegaABv} \omega_{(A,B)}({\tilde v})
  =  A\bigl({\tilde v}(1)\bigr)
  +\int_0^1{\tau}B\bigl({\tilde\gamma}'(t), {\tilde
    v}(t)\bigr)\,dt.
\end{equation} 
We can rewrite this as
\begin{equation}\label{def:omegaABv2} 
      \omega_{(A,B)}
      =  {\rm ev}_0^*A +\left[{\rm ev}_1^*(A-\ovA)-{\rm
          ev}_0^*(A-\ovA)\right] +\int_0^1\bigl(F^{\ovA}+{\tau}B
      \bigr).  
\end{equation} 
To obtain this we have simply used the relation (\ref{duh}). The
advantage in (\ref{def:omegaABv2}) is that it separates off the end
point terms and expresses $\oab$ as a perturbation of the simple
connection ${\rm ev}_0^*A$ by a vector in the tangent space
$T_{{\rm ev}_0^*A}{\mathcal A},$ where ${\mathcal A}$ is the space
of connections on the bundle $\pap$. Here note that the `tangent
vectors' to the affine space $\mathcal A$ at a connection $\omega$
are the $1$-forms $\omega_1-\omega$, with $\omega_1$ running over
$\mathcal A$.  A difference such as $\omega_1-\omega$ is precisely
an equivariant $LG$-valued $1$-form which vanishes on vertical
vectors.

Recall that the group $G$ acts on $P$ on the right
$$P\times G\to P: (p,g)\mapsto R_gp=pg$$ and this induces a natural
right action of $G$ on ${\mathcal P}_{\ovA}P$:
$${\mathcal
  P}_{\ovA}P\times G\to {\mathcal P}_{\ovA}P:
({\tilde\gamma},g)\mapsto R_g{\tilde\gamma}={\tilde\gamma}g$$ Then
for any vector $X$ in the Lie algebra $LG$, we have a vertical
vector
$${\tilde X}({{\tilde\gamma}})\in T_{{\tilde\gamma}}{\mathcal
P}_{\ovA}P$$ given by
$${\tilde
X}({{\tilde\gamma}})(t)=\begin{frac}{d}{du}
\end{frac}\Big|_{u=0}{\tilde\gamma}(t)\exp(uX)$$

\begin{prop}\label{P:omAB} The form $\omega_{(A,B)}$ is a
  connection form on the principal $G$-bundle $\pap\to \mpm$. More
  precisely,
$$\omega_{(A,B)}\bigl((R_g)_*v\bigr)=
{\rm Ad}(g^{-1})\omega_{(A,B)}(v)$$ for every $g\in G$, ${\tilde
  v}\in T_{\tilde\gamma}\bigl(\pap\bigr)$ and
$$\omega_{(A,B)}({\tilde X})=X$$ for every $X\in LG$.
\end{prop}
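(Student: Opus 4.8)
The plan is to verify the two defining properties of a connection form directly from the decomposition $\omega_{(A,B)}={\rm ev}_1^*A+\tau(Z)$ given in (\ref{def:omegaAB}), using that ${\rm ev}_1^*A$ is already known to be a connection on $\pap$ and that $\tau(Z)$ is a correction term which is equivariant and vanishes on verticals. Concretely, for the equivariance property I would compute $\omega_{(A,B)}\bigl((R_g)_*{\tilde v}\bigr)$ term by term. For the first term, since ${\rm ev}_1\circ R_g=R_g\circ{\rm ev}_1$ and $A$ is a connection on $P$, one gets $({\rm ev}_1^*A)\bigl((R_g)_*{\tilde v}\bigr)=A\bigl((R_g)_*{\tilde v}(1)\bigr)={\rm Ad}(g^{-1})A({\tilde v}(1))$. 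For the second term, $(R_g)_*{\tilde v}$ is the vector field $t\mapsto (R_g)_*{\tilde v}(t)$ along ${\tilde\gamma}g$, so
\begin{equation}
Z\bigl((R_g)_*{\tilde v}\bigr)=\int_0^1 B\bigl(({\tilde\gamma}g)'(t),(R_g)_*{\tilde v}(t)\bigr)\,dt=\int_0^1 (R_g^*B)\bigl({\tilde\gamma}'(t),{\tilde v}(t)\bigr)\,dt=\alpha(g^{-1})Z({\tilde v}),
\end{equation}
using $\alpha$-equivariance of $B$. Applying $\tau$ and invoking the first identity in (\ref{pfid}) in its differentiated form — namely $\tau\bigl(\alpha(g^{-1})X\bigr)={\rm Ad}(g^{-1})\tau(X)$ for $X\in LH$, which follows by differentiating $\tau(\alpha(g)h)=g\tau(h)g^{-1}$ at $h=e$ — gives $\tau\bigl(Z((R_g)_*{\tilde v})\bigr)={\rm Ad}(g^{-1})\tau(Z({\tilde v}))$. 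Adding the two terms yields the claimed transformation law.

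For the normalization property, I would evaluate $\omega_{(A,B)}$ on the fundamental vector field ${\tilde X}$ generated by $X\in LG$. Here ${\tilde X}({\tilde\gamma})(t)=\frac{d}{du}\big|_{u=0}{\tilde\gamma}(t)\exp(uX)$ is vertical in $T_{{\tilde\gamma}(t)}P$ for each $t$, in fact it is the fundamental vector field $\bigl({\tilde\gamma}(t)\bigr)X$ on $P$. Since $B$ vanishes when either argument is vertical, $Z({\tilde X})=\int_0^1 B\bigl({\tilde\gamma}'(t),{\tilde\gamma}(t)X\bigr)\,dt=0$, hence $\tau(Z({\tilde X}))=0$. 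For the remaining term, ${\tilde X}({\tilde\gamma})(1)$ is exactly the fundamental vector field on $P$ at ${\tilde\gamma}(1)$ generated by $X$, so $A\bigl({\tilde X}({\tilde\gamma})(1)\bigr)=X$ because $A$ is a connection on $P$. Therefore $\omega_{(A,B)}({\tilde X})=X+0=X$.

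The one point requiring a little care — and the only genuine obstacle — is making sure the formal manipulations are legitimate in this infinite-dimensional setting, where $\pap$ is not a manifold in the usual sense. But all the identities used are pointwise-in-$t$ statements about honest finite-dimensional objects (vectors in $T_pP$, the forms $A$, $B$, $F^{\ovA}$ on $P$) together with an integration over $t\in[0,1]$, so no genuine infinite-dimensional analysis is needed; one only needs that $(R_g)_*{\tilde v}$ and ${\tilde X}$ again lie in $T_{\tilde\gamma}\pap$ as defined by (\ref{def:vTPAP}), which is immediate since the right $G$-action preserves $\ovA$-horizontality and the defining tangency equation is $G$-equivariant. I would also remark that this is the promised verification, deferred after (\ref{def:omegaAB}), that shifting the connection ${\rm ev}_1^*A$ by the equivariant verticals-annihilating $1$-form $\tau(Z)$ again produces a connection. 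I would present the equivariance computation first and the normalization computation second, each as a short displayed calculation.
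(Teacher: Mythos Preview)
Your proposal is correct and follows essentially the same route as the paper: both reduce to showing that ${\rm ev}_1^*A$ already satisfies the connection axioms and that the correction term $\tau(Z)$ is equivariant and vanishes on fundamental vector fields, the latter coming from $\alpha$-equivariance and verticality of $B$. Your write-up is in fact more careful than the paper's on one point: you correctly record that $Z$ itself transforms by $\alpha(g^{-1})$ (since $B$ is $\alpha$-equivariant and $LH$-valued) and then invoke the differentiated form of the first identity in (\ref{pfid}) to convert $\tau\circ\alpha(g^{-1})$ into ${\rm Ad}(g^{-1})\circ\tau$, whereas the paper's sketch writes ${\rm Ad}(g^{-1})$ directly for $Z$.
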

 \begin{proof} It will suffice to show that for every $g\in G$,
$$Z\bigl((R_g)_*v\bigr)={\rm Ad}(g^{-1})Z(v)$$
  and every vector $v$ tangent to ${\mathcal
P}_{\ovA}P$, and
$$Z({\tilde X})=0$$ for every $X\in LG$. 

From (\ref{def:ZAB}) and the fact that $B$ vanishes on verticals it
is clear that $Z({\tilde X})$ is $0$. The equivariance under the
$G$-action follows also from (\ref{def:ZAB}), on using the
$G$-equivariance of the connection form $A$ and of the $2$-form
$B$, and the fact that the right action of $G$ carries
$\ovA$-horizontal paths into $\ovA$-horizontal paths.
\end{proof}

\smallskip
\underline{\em Parallel transport by $\omega_{(A,B)}$}

Let us examine how a path is parallel-transported by
$\omega_{(A,B)}$. At the infinitesimal level, all we need is to be
able to lift a given vector field $v:[0,1]\to TM$, along
$\gamma\in{\mathcal P}M$, to a vector field ${\tilde v}$ along
$\tilde\gamma$ such that:
\begin{itemize}
\item[(i)] ${\tilde v}$ is a vector in
  $T_{{\tilde\gamma}}\bigl(\pap\bigr)$, 
which means that it satisfies the   equation (\ref{def:vTPAP}):
\begin{equation}\label{davt}
\begin{frac}{\partial\ovA({\tilde
v}(t))}{\partial t}\end{frac}=
F^{\ovA}\left({\tilde\gamma}'(t),{\tilde 
v}(t)\right);
\end{equation}
\item[(ii)] ${\tilde v}$ is $\omega_{(A,B)}$-horizontal, i.e.
satisfies
the  equation
\begin{equation}\label{omabzero}
A\bigl({\tilde v}(1)\bigr)
+\int_0^1{\tau}B\bigl({\tilde\gamma}'(t), 
{\tilde v}(t)\bigr)\,dt=0.
\end{equation}
\end{itemize}
The following result gives a constructive description of ${\tilde v}$.

\begin{prop}\label{P:parlgam}  Assume that $A$, $\ovA$, $B$, and
  $\omega_{(A,B)}$ are as specified before. Let
  ${\tilde\gamma}\in\pap$, and
  $\gamma=\pi\circ{\tilde\gamma}\in\mpm$ its projection to a path
  on $M$, and consider any $v\in T_{\gamma}\mpm$. Then the
  $\omega_{(A,B)}$-horizontal lift ${\tilde v}\in
  T_{{\tilde\gamma}}\pap$ is given by
$${\tilde v}(t)={\tilde v}_{\ovA}^{\rm h}(t)+{\tilde v}^{\rm
  v}(t),$$ where ${\tilde v}_{\ovA}^{\rm h}(t)\in
T_{{\tilde\gamma}(t)}P$ is the $\ovA$-horizontal lift of $v(t)\in
T_{\gamma(t)}M$, and
\begin{equation}\label{tlvt}
{\tilde v}^{\rm v}(t) ={\tilde\gamma}(t)\left[\ovA({\tilde
v}(1)\bigr) -\int_t^1F^{\ovA}\bigl(
{\tilde\gamma}'(u),{\tilde
  v}^h_{\ovA}(u)\bigr)\,du\right]
\end{equation}
 wherein 
\begin{equation}\label{tv02}{\tilde v}(1)={\tilde v}^h_A(1)+
{\tilde\gamma}(1)X,
\end{equation}
with ${\tilde v}^h_A(1)$ being the $A$-horizontal lift of $v(1)$ in
$T_{{\tilde\gamma}(1)}P$, and 
\begin{equation}\label{Atv02}
X =- \int_0^1 {\tau}B \bigl({\tilde\gamma}'(t), {\tilde
v}^h_{\ovA}(t)\bigr)\,dt.
\end{equation}
\end{prop}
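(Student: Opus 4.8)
The plan is to verify directly that the vector field $\tilde v$ constructed in the statement satisfies both defining conditions: that it is a tangent vector to $\pap$ (equation (\ref{def:vTPAP})), and that it is $\omega_{(A,B)}$-horizontal (equation (\ref{omabzero})); uniqueness will then follow from Lemma \ref{l:lift}. First I would note that the $\ovA$-horizontal lift $\tilde v_{\ovA}^{\mathrm h}$ of $v$ has $\ovA(\tilde v_{\ovA}^{\mathrm h}(t))=0$ for all $t$, so when we apply $\ovA$ to the proposed $\tilde v(t)=\tilde v_{\ovA}^{\mathrm h}(t)+\tilde v^{\mathrm v}(t)$ only the vertical part contributes, giving
\begin{equation*}
\ovA\bigl(\tilde v(t)\bigr)=\ovA\bigl(\tilde v^{\mathrm v}(t)\bigr)=\ovA\bigl(\tilde v(1)\bigr)-\int_t^1 F^{\ovA}\bigl(\tilde\gamma'(u),\tilde v^{\mathrm h}_{\ovA}(u)\bigr)\,du,
\end{equation*}
using that $\ovA\bigl(\tilde\gamma(t)K\bigr)=K$ for $K\in LG$. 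Differentiating in $t$ gives $\partial_t\ovA(\tilde v(t))=F^{\ovA}\bigl(\tilde\gamma'(t),\tilde v^{\mathrm h}_{\ovA}(t)\bigr)$; since $F^{\ovA}$ vanishes when either argument is vertical and $\tilde v(t)-\tilde v^{\mathrm h}_{\ovA}(t)$ is vertical, this equals $F^{\ovA}\bigl(\tilde\gamma'(t),\tilde v(t)\bigr)$, which is exactly (\ref{def:vTPAP}). This also confirms that the formula (\ref{tlvt}) is the unique solution of that ODE with the prescribed value at $t=1$ — consistency of (\ref{tlvt}) at $t=1$ is immediate since the integral vanishes there.

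Next I would check the horizontality condition (\ref{omabzero}). Evaluating at $t=1$: by (\ref{tv02}) we have $\tilde v(1)=\tilde v^h_A(1)+\tilde\gamma(1)X$, so $A\bigl(\tilde v(1)\bigr)=A\bigl(\tilde v^h_A(1)\bigr)+A\bigl(\tilde\gamma(1)X\bigr)=0+X=X$, using that $\tilde v^h_A(1)$ is $A$-horizontal and $A\bigl(pK\bigr)=K$. On the other hand, $\tau B\bigl(\tilde\gamma'(t),\tilde v(t)\bigr)=\tau B\bigl(\tilde\gamma'(t),\tilde v^{\mathrm h}_{\ovA}(t)\bigr)$ because $B$ (hence $\tau B$) vanishes when either slot is vertical and the difference $\tilde v(t)-\tilde v^{\mathrm h}_{\ovA}(t)=\tilde v^{\mathrm v}(t)$ is vertical. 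Therefore $\int_0^1\tau B\bigl(\tilde\gamma'(t),\tilde v(t)\bigr)\,dt=\int_0^1\tau B\bigl(\tilde\gamma'(t),\tilde v^{\mathrm h}_{\ovA}(t)\bigr)\,dt=-X$ by the definition (\ref{Atv02}) of $X$. Adding, $\omega_{(A,B)}(\tilde v)=A\bigl(\tilde v(1)\bigr)+\int_0^1\tau B\bigl(\tilde\gamma'(t),\tilde v(t)\bigr)\,dt=X-X=0$, as required.

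Finally I would assemble these observations: the proposed $\tilde v$ projects to $v$ (since its vertical correction lies in the kernel of $\pi_*$ and $\tilde v^{\mathrm h}_{\ovA}$ lifts $v$), it lies in $T_{\tilde\gamma}\pap$ by the first computation, and it is $\omega_{(A,B)}$-horizontal by the second; by Lemma \ref{l:lift} any tangent vector in $T_{\tilde\gamma}\pap$ projecting to $v$ is determined by its value at $t=0$, and the horizontality constraint pins down that initial value, so $\tilde v$ is the unique $\omega_{(A,B)}$-horizontal lift. The only genuinely delicate point is the self-referential appearance of $\tilde v(1)$ inside formula (\ref{tlvt}): I expect the main thing to spell out carefully is that (\ref{tv02})–(\ref{Atv02}) give a well-defined closed-form expression for $\tilde v(1)$ in terms of $v$ and $\tilde\gamma$ alone — note $X$ in (\ref{Atv02}) depends only on $\tilde v^h_{\ovA}$, not on $\tilde v(1)$ — so there is no circularity, and then everything else is the routine verification sketched above.
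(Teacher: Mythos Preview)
Your proof is correct and follows essentially the same approach as the paper: both arguments rest on the $\ovA$-horizontal/vertical decomposition of $\tilde v(t)$, the tangency ODE (\ref{def:vTPAP}) integrated from $t=1$, and the vanishing of $B$ and $F^{\ovA}$ on vertical vectors. The only difference is presentational---the paper derives the formula from the two conditions, whereas you verify that the stated formula satisfies them---and your explicit remark that (\ref{tv02})--(\ref{Atv02}) determine $\tilde v(1)$ without circularity is a helpful clarification the paper leaves implicit.
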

Note that $X$ in (\ref{Atv02}) is $A\bigl({\tilde v}(1)\bigr)$.

Note also that since $\tilde v$ is tangent to $\pap$, the vector
${\tilde v}^{\rm v}(t) $ is also given by 
\begin{equation}\label{vvt2}
{\tilde v}^{\rm v}(t) ={\tilde\gamma}(t)\left[{\ovA}\bigl({\tilde
    v}(0)\bigr)+\int_0^tF^{\ovA}\bigl( {\tilde\gamma}'(u),{\tilde
    v}^h_{\ovA}(u)\bigr)\,du\right] 
\end{equation}

\begin{proof} The $\oab$ horizontal lift $\tilde v$ of $v$ in
  $T_{{\tilde\gamma}}\bigl(\pap\bigr)$ is the vector field $\tilde
  v$ along ${\tilde\gamma}$ which projects by $\pi_*$ to $v$ and
  satisfies the condition (\ref{omabzero}):
\begin{equation}\label{omabzero2}
A\bigl({\tilde v}(1)\bigr)
+\int_0^1{\tau}B\bigl({\tilde\gamma}'(t), 
{\tilde v}(t)\bigr)\,dt=0.
\end{equation}
Now for each $t\in [0,1]$, we can split the vector ${\tilde v}(t)$
into an $\ovA$-horizontal part and a vertical part ${\tilde
  v}^v(t)$ which is essentially the element $\ovA\bigl({\tilde
  v}^v(t)\bigr)\in LG$ viewed as a vector in the vertical subspace
in $T_{{\tilde\gamma}(t)}P$:
$${\tilde v}(t)={\tilde v}^h_{\ovA}(t)+{\tilde v}^v(t)$$
and the vertical part here is given by
$${\tilde v}^v(t)={\tilde\gamma}(t) \ovA\bigl({\tilde
  v}(t)\bigr).$$  
Since the vector field ${\tilde v}$ is actually a vector in
$T_{{\tilde\gamma}}\bigl(\pap\bigr)$, we have, from (\ref{davt}),  the relation 
$$\ovA\bigl({\tilde v}(t)\bigr)=\ovA({\tilde
  v}(1)\bigr) -\int_t^1F^{\ovA}\bigl( {\tilde\gamma}'(u),{\tilde
  v}^h_{\ovA}(u)\bigr)\,du.$$ We need now only verify the
expression (\ref{tv02}) for ${\tilde v}(1)$. To this end, we first
split this into $A$-horizontal and a corresponding vertical part:
$${\tilde
v}(1)={\tilde
v}^h_A(1) +{\tilde\gamma}(1)A\bigl({\tilde
v}(1)\bigr)$$
The vector $A\bigl({\tilde
v}(1)\bigr)$ is obtained from  (\ref{omabzero2}), and thus proves
(\ref{tv02}). 
\end{proof}

There is an observation to be made from Proposition
\ref{P:parlgam}. The equation (\ref{Atv02}) has, on the right side,
the integral over the entire curve $\tilde\gamma$. Thus, if we were
to consider parallel-transport of only, say, the `left half' of
$\tilde\gamma$, we would, in general, end up with a {\em different}
path of paths!

\medskip

\underline{\em Reparametrization Invariance}
\medskip

If a path is reparametrized, then, technically, it is a different
point in path space. Does parallel-transport along a path of paths
depend on the specific parametrization of the paths? We shall
obtain conditions to ensure that there is no such
dependence. Moreover, in this case, we shall also show that
parallel transport by $\oab$ along a path of paths depends
essentially on the surface swept out by this path of paths, rather
than the specific parametrization of this surface.

For the following result, recall that we are working with Lie
groups $G$, $H$, smooth homomorphism $\tau:H\to G$, smooth map
$\alpha:G\times H\to H:(g,h)\mapsto \alpha(g)h$, where each
$\alpha(g)$ is an automorphism of $H$, and the maps $\tau$ and
$\alpha$ satisfy (\ref{pfid}). Let $\pi:P\to M$ be a principal
$G$-bundle, with connections $A$ and $\ovA$, and $B$ an $LH$-valued
$\alpha$-equivariant $2$-form on $P$ vanishing on vertical
vectors. As before, on the space $\pap$ of $\ovA$-horizontal paths,
viewed as a principal $G$-bundle over the space $\mpm$ of smooth
paths in $M$, there is the connection form $\oab$ given by
 $$\oab={\rm ev}_1^*A+\int_0^1{\tau}B.$$
 By a `smooth path' $s\mapsto\Gamma_s$ in $\mpm$ we mean a smooth
 map $$[0,1]^2\to M: (t,s)\mapsto\Gamma(t,s)=\Gamma_s(t),$$ viewed
 as a path of paths $\Gamma_s\in\mpm$.

  With this notation and framework, we have:

  \begin{theorem}\label{T:reparm}
    Let $$\Phi:[0,1]^2\to[0,1]^2:(t,s)\mapsto(\Phi_s(t),\Phi^t(s)
    )$$ be a smooth diffeomorphism which fixes each vertex of
    $[0,1]^2$.  Assume that
  \begin{itemize}
  \item[{\rm (i)}] either
  \begin{equation}\label{fakecurvzero}
    F^{\ovA}+\tau(B)=0\end{equation}
  and $\Phi$  carries each   $s$-fixed section $[0,1]\times\{s\}$
  into an $s$-fixed section $[0,1]\times\{\Phi^0(s)\}$; 
  \item[{\rm (ii)}] or
\begin{equation}\label{modiffake}
  \left[{\rm ev}_1^*(A-\ovA)-{\rm ev}_0^*(A-\ovA)\right]
  +\int_0^1\bigl(F^{\ovA}+{\tau}B)=0,
\end{equation} 
 $\Phi$ maps each boundary edge of $[0,1]^2$ into itself, and 
$\Phi^0(s)=\Phi^1(s)$ for all $s\in [0,1]$.  
\end{itemize}
Then the $\oab$-parallel-translate of the point $\tlG_0\circ\Phi_0$
along the path $s\mapsto (\Gamma\circ\Phi)_s$, is
$\tlG_1\circ\Phi_1$, where $\tlG_1$ is the
$\oab$-parallel-translate of $\tlG_0$ along $s\mapsto\Gamma_s$.

As a special case, if the path $s\mapsto\Gamma_s$ is constant and
$\Phi_0$ the identity map on $[0,1]$, so that $\Gamma_1$ is simply
a reparametrization of $\Gamma_0$, then, under conditions (i) or
(ii) above, the $\oab$-parallel-translate of the point $\tlG_0$
along the path $s\mapsto (\Gamma\circ\Phi)_s$, is $\tlG_0
\circ\Phi_1$, i.e., the appropriate reparametrizaton of the
original path $\tlG_0$.

\end{theorem}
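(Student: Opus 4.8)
The plan is to treat both cases by one uniqueness-of-horizontal-lift argument, after noting that each hypothesis collapses $\oab$ into a form that reads a horizontal path $\tlg\in\pap$ and a tangent vector $\tl v$ along it only through the boundary values $\tl v(0)$ and $\tl v(1)$. Writing $\oab={\rm ev}_1^*A+\int_0^1\tau B$ and using (\ref{duh}) with $\tau B=-F^{\ovA}$, hypothesis (\ref{fakecurvzero}) gives
\begin{equation*}
\oab(\tl v)=A\bigl(\tl v(1)\bigr)-\ovA\bigl(\tl v(1)\bigr)+\ovA\bigl(\tl v(0)\bigr)\qquad\bigl(\tl v\in T_{\tlg}\pap\bigr),
\end{equation*}
while (\ref{modiffake}) together with (\ref{def:omegaABv2}) gives the still simpler $\oab={\rm ev}_0^*A$. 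So in either case $\oab$-parallel transport will be reconstructed from ordinary $A$- and $\ovA$-parallel transport along the two boundary curves $s\mapsto\Gamma(0,s)$ and $s\mapsto\Gamma(1,s)$, and the statement will come down to the elementary reparametrization invariance of those, together with the triviality that a reparametrization of an $\ovA$-horizontal path is still $\ovA$-horizontal.

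Let $\tlG:[0,1]^2\to P$ be the map for which each $t\mapsto\tlG(t,s)$ is the $\ovA$-horizontal lift of $\Gamma_s$ and $s\mapsto\tlG(\cdot,s)$ is the $\oab$-horizontal lift of $s\mapsto\Gamma_s$ starting from $\tlG_0$; by Proposition \ref{P:parlgam} this exists, is unique, satisfies $\oab(\partial_s\tlG_s)=0$ for all $s$, and has $\tlG(\cdot,1)=\tlG_1$. I would then exhibit a candidate $\tl\Psi$ for the $\oab$-horizontal lift of $s\mapsto(\Gamma\circ\Phi)_s$. In case (i) the hypothesis on $\Phi$ says exactly that the second coordinate $\Phi^t(s)$ of $\Phi(t,s)$ does not depend on $t$; putting $\phi:=\Phi^0$, I take $\tl\Psi_s:=\tlG_{\phi(s)}\circ\Phi_s$, which is a reparametrization of the single $\ovA$-horizontal path $\tlG_{\phi(s)}$, hence lies in $\pap$ and projects to $(\Gamma\circ\Phi)_s$. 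In case (ii) I let $\tl\Psi_s$ be the $\ovA$-horizontal lift of $(\Gamma\circ\Phi)_s$ whose initial point is the $A$-parallel transport of $\tlG_0(0)$ along $s\mapsto\Gamma\bigl(0,\Phi^0(s)\bigr)$ (a reparametrization of $s\mapsto\Gamma(0,s)$, since $\Phi$ carries the edge $\{0\}\times[0,1]$ into itself). In either case $\tl\Psi(\cdot,0)=\tlG_0\circ\Phi_0$, using only that $\Phi$ fixes the vertices (and in case (ii) fixes the $t=0$ edge setwise).

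The essential step is to check that $s\mapsto\tl\Psi_s$ is $\oab$-horizontal. In case (i), differentiate $\tl\Psi_s(t)=\tlG\bigl(\Phi_s(t),\phi(s)\bigr)$ in $s$: since $\Phi$ fixes vertices and preserves $s$-sections one has $\Phi_s(0)\equiv0$ and $\Phi_s(1)\equiv1$, so the term coming from the $t$-slot vanishes at $t=0$ and $t=1$, leaving $\partial_s\tl\Psi_s(j)=\phi'(s)\,\bigl(\partial_\sigma\tlG_\sigma\bigr)(j)\big|_{\sigma=\phi(s)}$ for $j=0,1$; inserting this into the endpoint-only formula for $\oab$ above yields $\oab(\partial_s\tl\Psi_s)=\phi'(s)\,\oab\bigl(\partial_\sigma\tlG_\sigma\bigr)\big|_{\sigma=\phi(s)}=0$. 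In case (ii), $\oab(\partial_s\tl\Psi_s)=A\bigl(\partial_s\tl\Psi_s(0)\bigr)=0$ directly, because $s\mapsto\tl\Psi_s(0)$ was built to be $A$-horizontal; here the conditions that $\Phi$ fixes every edge setwise and that $\Phi^0=\Phi^1$ are used to identify the boundary data of $\Gamma\circ\Phi$ (at $t=0$ and $t=1$) with the boundary data of $\Gamma$ reparametrized by one and the same map, so that the endpoint and Chen-integral terms in (\ref{def:omegaABv2}) are controlled. With horizontality in hand, uniqueness of the $\oab$-horizontal lift (Proposition \ref{P:parlgam}) identifies $\tl\Psi$ with it; evaluating at $s=1$, where $\Phi^t(1)=1$, $\phi(1)=1$ and $\Phi_1(0)=0$ force $(\Gamma\circ\Phi)_1=\Gamma_1\circ\Phi_1$ and $\tl\Psi_1=\tlG_1\circ\Phi_1$, gives the theorem. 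The stated special case follows at once: for a constant family $s\mapsto\Gamma_s$ one has $\tlG_1=\tlG_0$, so the conclusion with $\Phi_0=\mathrm{id}$ reads that the $\oab$-translate of $\tlG_0$ along $s\mapsto(\Gamma\circ\Phi)_s$ is $\tlG_0\circ\Phi_1$.

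I expect the genuine difficulty to be bookkeeping, not concept. One must ensure the candidate $\tl\Psi_s$ really lies in $\pap$ for every intermediate $s$ --- immediate in case (i), where it is a reparametrization of a horizontal path, but needing care in case (ii), where $\Phi$ may push the $s$-curves off the $\ovA$-horizontal foliation in a way that must stay invisible to ${\rm ev}_0^*A$ --- and one must run the chain-rule computations while scrupulously keeping apart the $t$-parameter, along paths in $P$, and the $s$-parameter, along paths in path space. The remaining subtle point, the passage from ``$\partial_s\tl\Psi_s$ is $\oab$-horizontal for each $s$'' to ``$\tl\Psi$ is the $\oab$-parallel transport of $\tlG_0\circ\Phi_0$'', is legitimate precisely because Proposition \ref{P:parlgam} supplies existence and uniqueness of $\oab$-horizontal lifts over paths in $\mpm$.
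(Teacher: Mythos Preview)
Your proof is correct and takes essentially the same approach as the paper: both arguments collapse $\oab$ under each hypothesis to an expression that reads a tangent vector only through its endpoint values $\tl v(0),\tl v(1)$ (case (i)) or through $\tl v(0)$ alone (case (ii)), then verify that the reparametrized family is $\oab$-horizontal by inspecting those boundary values and conclude by uniqueness of the horizontal lift. Your candidate-plus-uniqueness framing makes explicit what the paper's terser proof leaves implicit, but the strategy is identical; one small remark --- your sentence in case (ii) about the edge conditions being used ``so that the endpoint and Chen-integral terms in (\ref{def:omegaABv2}) are controlled'' is unnecessary, since under (\ref{modiffake}) those terms vanish outright and your earlier line $\oab(\partial_s\tl\Psi_s)=A(\partial_s\tl\Psi_s(0))=0$ is already the full horizontality check.
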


Note that the path $(\tlG\circ\Phi)_0$ projects down to
$(\Gamma\circ\Phi)_0$, which, by the boundary behavior of $\Phi$,
is actually that path $\Gamma_0\circ\Phi_0$, in other words
$\Gamma_0$ reparametrized. Similarly, $(\tlG\circ\Phi)_1$ is an
$\ovA$-horizontal lift of the path $\Gamma_1$, reparametrized by
$\Phi_1$.

If $A=\ovA$ then conditions (\ref{modiffake}) and
(\ref{fakecurvzero}) are the same, and so in this case the weaker
condition on $\Phi$ in (ii) suffices.

\begin{proof} Suppose (\ref{fakecurvzero}) holds. Then the
  connection $\oab$ has the form
$$ {\rm ev}_0^*A+ \left[{\rm ev}_1^*(A-\ovA)-{\rm
    ev}_0^*(A-\ovA)\right].$$ The crucial point is that this
depends only on the end points, i.e., if $\tilde\gamma\in\pap$ and
${\tilde V}\in T_{\tilde\gamma}\pap$ then $\oab({\tilde V})$
depends only on ${\tilde V}(0)$ and ${\tilde V}(1)$. If the
conditions on $\Phi$ in (i) hold then reparametrization has the
effect of replacing each ${\tilde\Gamma}_s$ with
${\tilde\Gamma}_{\Phi^0(s)}\circ\Phi_s$, which is in $\pap$, and
the vector field $t\mapsto
\partial_{s}({\tilde\Gamma}_{\Phi^0(s)}\circ\Phi_s(t))$ is an
$\oab$-horizontal vector, because its end point values are those of 
$t\mapsto \partial_{s}({\tilde\Gamma}_{\Phi^0(s)}(t))$, since
$\Phi_s(t)$ equals $t$ if $t$ is $0$ or $1$. 

Now suppose (\ref{modiffake}) holds. Then $\oab$ becomes simply
${\rm ev}_0^*A$. In this case $\oab({\tilde V})$ depends on $\tilde
V$ only through the initial value ${\tilde V}(0)$. Thus, the
$\oab$-parallel-transport of ${\tilde\gamma}\in\pap$, along a path
$s\mapsto \Gamma_s\in\mpm$, is obtained by
$A$-parallel-transporting the initial point ${\tilde\gamma}(0)$
along the path $s\mapsto\Gamma^0(s)$, and shooting off
$\ovA$-horizontal paths lying above the paths $\Gamma_s$. (Since
the paths $\Gamma_s$ do not necessarily have the second component
fixed, their horizontal lifts need not be of the form
${\tilde\Gamma}_s\circ\Phi_s$, except at $s=0$ and $s=1$, when the
composition ${\tilde\Gamma}_{\Phi_s}\circ\Phi_s$ is guaranteed to
be meaningful.) From this it is clear that parallel translating
${\tilde\Gamma}_0\circ{\Phi_0}$, by $\oab$ along the path
$s\mapsto\Gamma_s$, results, at $s=1$, in the path
${\tilde\Gamma}_1\circ\Phi_1$.
\end{proof}

\smallskip

\underline{\em The curvature of $\omega_{(A,B)}$}
\smallskip

We can compute the curvature of the connection $\omega_{(A,B)}$.
This is, by definition,
$$\Omega_{(A,B)}=d\oab+\frac{1}{2}[\oab\wedge\oab],$$
where the exterior differential $d$ is understood in a natural
sense that will become clearer in the proof below. More
technically, we are using here notions of calculus on smooth
spaces; see, for instance, Stacey \cite{Stacey} for a survey, and
Viro \cite{Viro} for another approach.

First we describe some notation about Chen integrals in the present
context. If $B$ is a $2$-form on $P$, with values in a Lie algebra,
then its Chen integral $\int_0^1B$, restricted to $\pap$, is a
$1$-form on $\pap$ given on the vector ${\tilde V}\in
T_{{\tilde\gamma}}\bigl(\pap\bigr)$ by
$$\left(\int_0^1B\right)({\tilde
  V})=\int_0^1B\bigl({\tilde\gamma}'(t), {\tilde V}(t)\bigr)\,dt.$$
If $C$ is also a $2$-form on $P$ with values in the same Lie
algebra, we have a product $2$-form on the path space $\pap$ given
on ${\tilde X}, {\tilde Y}\in T_{{\tilde\gamma}}\bigl(\pap\bigr)$
by
\begin{equation}\label{chenpr}\begin{split}
    \left(\int_0^1\right)^2[B{\wedge} C]({\tilde X}, {\tilde
      Y})&\\
      &\hskip -1in=\int_{0\leq u<v\leq 1}
      \left[B\bigl({\tilde\gamma}'(u), 
      {\tilde X}(u)\bigr),
      C\bigl({\tilde\gamma}'(v), {\tilde
        Y}(v)\bigr)\right]\,du\,dv\\ 
    &\hskip -.75in - {}\int_{0\leq u<v\leq 1}
    \left[C\bigl({\tilde\gamma}'(u), {\tilde X}(u)\bigr),
      B\bigl({\tilde\gamma}'(v), {\tilde
        Y}(v)\bigr)\right]\,du\,dv\\
    &\hskip -1in =\int_0^1\int_0^1\left[B\bigl({\tilde\gamma}'(u),
      {\tilde 
        X}(u)\bigr), C\bigl({\tilde\gamma}'(v), {\tilde
        Y}(v)\bigr)\right]\,du\,dv.\end{split}
\end{equation}

\begin{prop}\label{P:curvom} The curvature of $\omega_{(A,B)}$ is 
\begin{equation}\label{curvom}\begin{split}
    \Omega^{\omega_{(A,B)}}&={\rm ev}_1^*F^A +d\left(\int_0^1
      {{\tau}B}\right)\\
    &\qquad\qquad + \left[{\rm ev}_1^*A{\wedge }
      \int_0^1{{\tau}B}\right]
    +\left(\int_0^1\right)^2[{{\tau}B}{\wedge}{{\tau}B}
    ],\end{split}\end{equation} where the integrals are Chen
integrals.
\end{prop}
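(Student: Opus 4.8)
The plan is to compute $d\oab + \tfrac12[\oab\wedge\oab]$ directly from the formula $\oab = {\rm ev}_1^*A + \tau(Z)$, with $Z = \int_0^1 B$, treating $d$ and the wedge-bracket in the ``smooth space'' sense of Chen calculus. First I would split the computation along the two summands: $d\oab = d({\rm ev}_1^*A) + d(\tau(Z))$, and
$$\tfrac12[\oab\wedge\oab] = \tfrac12[{\rm ev}_1^*A\wedge{\rm ev}_1^*A] + [{\rm ev}_1^*A\wedge\tau(Z)] + \tfrac12[\tau(Z)\wedge\tau(Z)].$$
Since ${\rm ev}_1$ is a smooth map $\pap\to P$ commuting with differentials, ${\rm ev}_1^*$ is a chain map, so $d({\rm ev}_1^*A) + \tfrac12[{\rm ev}_1^*A\wedge{\rm ev}_1^*A] = {\rm ev}_1^*\bigl(dA + \tfrac12[A\wedge A]\bigr) = {\rm ev}_1^*F^A$, which accounts for the first term in \eqref{curvom}. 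Because $\tau:LH\to LG$ is a Lie algebra homomorphism (the derivative of the group homomorphism $\tau:H\to G$), it commutes with $d$ and with brackets, so $d(\tau Z) = \tau(dZ)$ and $\tfrac12[\tau Z\wedge\tau Z] = \tau\bigl(\tfrac12[Z\wedge Z]\bigr)$; these give the $d\bigl(\int_0^1\tau B\bigr)$ term and, modulo identifying $\tfrac12[Z\wedge Z]$ with the Chen square, the last term $(\int_0^1)^2[\tau B\wedge\tau B]$. The cross term $[{\rm ev}_1^*A\wedge\tau(Z)]$ is exactly the third term in \eqref{curvom}.

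The substantive step — and the one I expect to be the main obstacle — is the identification
$$\tfrac12\bigl[Z\wedge Z\bigr] \;=\; \left(\int_0^1\right)^2[B\wedge B]$$
as $1$-forms, i.e.\ verifying that the naive wedge-square of the Chen integral $Z=\int_0^1 B$ reproduces the iterated (double) Chen integral defined in \eqref{chenpr}. This is where the ordered-versus-unordered structure of Chen integrals enters: $\bigl[Z\wedge Z\bigr]({\tilde X},{\tilde Y})$ expands into a double integral over the full square $[0,1]^2$ of $\bigl[B(\tilde\gamma'(u),\tilde X(u)),B(\tilde\gamma'(v),\tilde Y(v))\bigr]$, antisymmetrized in $\tilde X,\tilde Y$; one then splits $[0,1]^2$ into $u<v$ and $u>v$, relabels, and uses antisymmetry of the Lie bracket together with the antisymmetrization in $(\tilde X,\tilde Y)$ to collapse this to the expression in the second and third lines of \eqref{chenpr}. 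I would present this as a short lemma and carry out the region-splitting carefully, since the factor of $\tfrac12$, the sign from the bracket, and the sign from the form-antisymmetrization must all be tracked; this is the only place where a genuine (if routine) calculation is unavoidable.

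Finally, I would address the meaning of $d$ on $\pap$: since no manifold structure is assumed, $dZ$ must be interpreted via Chen's formula for the exterior derivative of an iterated integral, or equivalently by testing against pairs of ``tangent vectors'' (vector fields along $\tilde\gamma$) and using the fact — already invoked implicitly in Proposition \ref{P:tngt} — that mixed partials of the defining map $\tilde\Gamma(t,s)$ commute, so that $[\partial_t\tilde\Gamma,\partial_s\tilde\Gamma]=0$. I would remark that $dZ$ does not further simplify in general (there is no Bianchi-type collapse because $B$ is not closed), which is precisely why the term $d\bigl(\int_0^1\tau B\bigr)$ is left as is in the statement. Assembling the five pieces — ${\rm ev}_1^*F^A$, $\tau(dZ)$, the cross term, and the Chen square — and noting that $\tfrac12[\oab\wedge\oab]$ contributed exactly the cross term plus the square (the ${\rm ev}_1^*A$-square having been absorbed into ${\rm ev}_1^*F^A$) yields \eqref{curvom}.
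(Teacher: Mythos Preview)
Your proposal is correct and follows essentially the same route as the paper: expand $d\oab + \tfrac12[\oab\wedge\oab]$ bilinearly along the two summands of $\oab$, absorb the pure ${\rm ev}_1^*A$ terms into ${\rm ev}_1^*F^A$, and identify the remaining pieces. The step you flag as the main obstacle---the identification $\tfrac12[Z\wedge Z]=(\int_0^1)^2[\tau B\wedge\tau B]$---is in fact immediate from the last line of the paper's definition (\ref{chenpr}), which already rewrites the Chen square as the full double integral $\int_0^1\int_0^1[\,\cdot\,,\,\cdot\,]\,du\,dv = [Z(\tilde X),Z(\tilde Y)]$; the paper's proof simply evaluates $\tfrac12[\oab\wedge\oab]$ on a pair $(\tilde X,\tilde Y)$ and reads off the four terms, without any further region-splitting in the proof itself.
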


\begin{proof} From
$$\omega_{(A,B)}={\rm ev}_1^*A+\int_0^1{{\tau}B},$$
we have
\begin{equation} \begin{split} \Omega^{\omega_{(A,B)}}&=
    d\omega_{(A,B)}+\frac{1}{2}[\omega_{(A,B)}\wedge \omega_{(A,B)}]\\
    &= {\rm ev}_1^*dA +d\int_0^1{{\tau}B} +W,
\end{split}\end{equation}
where
\begin{equation} \begin{split} W({\tilde X}, {\tilde Y}) &=
    [\omega_{(A,B)}({\tilde X}),
    \omega_{(A,B)}({\tilde Y})]\\
    &=[{\rm ev}_1^*A({\tilde X}), {\rm ev}_1^*A({\tilde Y})]\\
    & \quad + \left[{\rm ev}_1^*A({\tilde X}), \int_0^1
      {{\tau}B}\bigl({\tilde\gamma}'(t), {\tilde
        Y}(t)\bigr)\,dt\right]
    \\
    &\qquad + \left[ \int_0^1 {{\tau}B}\bigl({\tilde\gamma}'(t),
      {\tilde
        X}(t)\bigr)\,dt, {\rm ev}_1^*A({\tilde Y})\right]\\
    &\qquad \qquad+\int_0^1\int_0^1
    \left[{{\tau}B}\bigl({\tilde\gamma}'(u), {\tilde
        X}(u)\bigr),{{\tau}B}\bigl({\tilde\gamma}'(v), {\tilde
        Y}(v)\bigr)\right]\, du\,dv\\
    &=[{\rm ev}_1^*A, {\rm ev}_1^*A]({\tilde X},{\tilde Y}) +
    \left[{\rm ev}_1^*A{\wedge} \int_0^1{{\tau}B}\right]({\tilde
      X},
    {\tilde Y})\\
    &\qquad\qquad +\left(\int_0^1\right)^2[{{\tau}B}{\wedge}
    {{\tau}B}] ({\tilde X}, {\tilde Y}).
\end{split}\end{equation}
\end{proof}

In the case $A=\ovA$, and without $\tau$, the expression for the
curvature can be expressed in terms of the `fake curvature'
$F^{\ovA}+B$. For a result of this type, for a related connection
form, see Cattaneo et al. \cite[Theorem 2.6]{Cat} have calculated a
similar formula for curvature of a related connection form.

A more detailed exploration of the fake curvature would be of
interest.

\medskip
\noindent\underline{\em Parallel-transport of horizontal paths}

As before, $A$ and $\ovA$ are connections on a principal $G$-bundle
$\pi:P\to M$, and $B$ is an $LH$-valued $\alpha$-equivariant
$2$-form on $P$ vanishing on vertical vectors. Also ${\mathcal P}X$
is the space of smooth paths $[0,1]\to X$ in a space $X$, and
$\pap$ is the space of smooth $\ovA$-horizontal paths in $P$.

Our objective now is to express parallel-transport along paths in
${\mathcal P}M$ in terms of a smooth local section of the bundle
$P\to M$:
$$\sigma:U\to P$$
where $U$ is an open set in $M$.  We will focus only on paths lying
entirely inside $U$.

The section $\sigma$ determines a section ${\tilde\sigma}$ for the
bundle $\pap\to {\mathcal P}M$: if $\gamma\in\mpm$ then $
{\tilde\sigma}(\gamma)$ is the unique $\ovA$-horizontal path in
$P$, with initial point $\sigma\bigl(\gamma(0)\bigr)$, which
projects down to $\gamma$.  Thus,
\begin{equation}\label{sigmab}
{\tilde\sigma}(\gamma)(t)=\sigma(\gamma(t)){\ova}(t),
\end{equation}
for all $t\in [0,1]$, where ${\ova}(t)\in G$ satisfies the
differential equation 
\begin{equation}\label{btd}
  {\ova}(t)^{-1}{\ova}'(t)=-{\rm Ad}\bigl({\ova}(t)^{-1}\bigr)
  \ovA\left((\sigma\circ\gamma)'(t)\right)
\end{equation}
for $t\in [0,1]$, and the initial value $\ova(0)$ is $e$.

   \begin{figure}
 \begin{center}

    \includegraphics[height=2in]{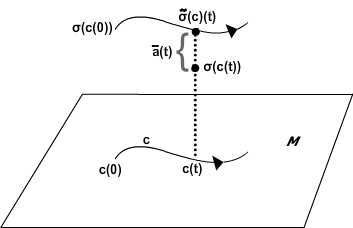}
\end{center}
 \caption{The section ${\tilde\sigma}$ applied to a path $c$}
    \label{fig:sigtildc}
   \end{figure}

Recall that a tangent vector $V\in T_{\gamma}\bigl(\mpm\bigr)$
is a smooth vector field along the path $\gamma$. Let us denote
${\tilde\sigma}(\gamma)$ by $\tilde\gamma$:
$${\tilde\gamma}\stackrel{\rm def}{=}{\tilde\sigma}(\gamma).$$
Note, for later use, that
\begin{equation}\label{tildgamprim}
{\tilde\gamma}'(t)=\sigma_*\bigl(\gamma'(t)\bigr)\ova(t)
+\underbrace{
{\tilde\gamma}(t)\ova(t)^{-1}\ova'(t)}_{\rm vertical}.
\end{equation}
Now define the vector
\begin{equation}\label{def:tildeV}
{\tilde V} ={\tilde\sigma}_*(V)\in
T_{{\tilde\gamma}}\bigl(\pap\bigr)
\end{equation}
to be the  vector ${\tilde V}$ in
$T_{{\tilde\gamma}}\bigl(\pap\bigr)$ whose 
initial value ${\tilde V}(0)$ is
$${\tilde V}(0)=\sigma_*\bigl(V(0)\bigr).$$
The existence and uniqueness of ${\tilde V}$ was proved in Lemma
\ref{l:lift}.

Note that ${\tilde V}(t)\in T_{{\tilde\gamma}(t)}P$ and
$(\sigma_*V)(t)\in T_{\sigma(\gamma(t))}P$, are generally different
vectors. However, $(\sigma_*V)(t)\ova(t)$ and ${\tilde V}(t)$ are
both in $ T_{{\tilde\gamma}(t)}P$ and differ by a vertical vector
because they have the same projection $V(t)$ under $\pi_*$:
   \begin{equation}\label{sigstarVvert}
    {\tilde V}(t)=(\sigma_*V)(t)\ova(t) + {\rm vertical\, vector}.
\end{equation}

Our objective now is to determine the $LG$-valued $1$-form
\begin{equation}\label{oaab}
\omega_{(\ovA,A,B)}={\tilde\sigma}^*\omega_{(A,B)}
\end{equation}
on $\mpm$, defined on any vector  $V\in T_{\gamma}(\mpm)$ by
\begin{equation}\label{defoaab}
\omega_{(\ovA,A,B)}(V)=
\omega_{(A,B)}\bigl({\tilde\sigma}_*V\bigr).
\end{equation}
We can now work out an explicit expression for this $1$-form.
\begin{prop}\label{P:oaab} With notation as above,
and $V\in T_{\gamma}\bigl(\mpm\bigr)$,
\begin{equation}\label{oaabv}
\omega_{(\ovA,A,B)}(V)={\rm
  Ad}\bigl(\ova(1)^{-1}\bigr)A_{\sigma}\left(V(1)\right)
+\int_0^1{\rm 
Ad}\bigl({\overline a}(t)^{-1}\bigr)
{\tau}B_{\sigma}\bigl(\gamma'(t),V(t)\bigr)\,dt,
\end{equation}
where $C_{\sigma}$ denotes the pullback $\sigma^*C$ on $M$ of a
form $C$ on $P$, and ${\overline a}:[0,1]\to G$ describes
parallel-transport along $\gamma$, i.e., satisfies
$${\overline a}(t)^{-1}{\overline a}'(t)=-{\rm Ad}\bigl({\overline
  a}(t)^{-1}\bigr)\ovA_{\sigma}\bigl(\gamma'(t)\bigr)$$ with
initial condition ${\overline a}(0)=e$. The formula for
$\omega_{(\ovA,A,B)}(V)$ can also be expressed as
\begin{equation}\label{oaabv2}\begin{split}
\omega_{(\ovA,A,B)}(V)&\\
&\hskip -.6in= A_{\sigma}\left(V(0)\right)+ \left[{\rm
    Ad}\bigl(\ova(1)^{-1}\bigr)(A_{\sigma}- 
\ovA_{\sigma})\left(V(1)\right) -(A_{\sigma}-
\ovA_{\sigma})\left(V(0)\right)\right]\\
& +\int_0^1{\rm
Ad}\bigl({\overline a}(t)^{-1}\bigr)\bigl(F^{\ovA}_{\sigma}+
{\tau}B_{\sigma}\bigr)\bigl(\gamma'(t),V(t)\bigr)
\,dt.\end{split}
\end{equation}
\end{prop}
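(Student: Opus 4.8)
The plan is to unwind the definition $\omega_{(\ovA,A,B)}(V)=\omega_{(A,B)}\bigl({\tilde\sigma}_*V\bigr)$ using the formula $\omega_{(A,B)}=\mathrm{ev}_1^*A+\int_0^1\tau B$ from (\ref{def:omegaAB}), so that
$$\omega_{(\ovA,A,B)}(V)=A\bigl({\tilde V}(1)\bigr)+\int_0^1\tau B\bigl({\tilde\gamma}'(t),{\tilde V}(t)\bigr)\,dt,$$
and then to rewrite each term on the right in terms of objects living downstairs on $M$, using the explicit description ${\tilde\gamma}(t)=\sigma(\gamma(t))\ova(t)$ from (\ref{sigmab}). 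First I would handle the $B$-term: since $B$ is vertical, $B\bigl({\tilde\gamma}'(t),{\tilde V}(t)\bigr)$ is unchanged if I replace each argument by anything differing from it by a vertical vector; by (\ref{tildgamprim}) and (\ref{sigstarVvert}), modulo verticals ${\tilde\gamma}'(t)=\sigma_*(\gamma'(t))\ova(t)$ and ${\tilde V}(t)=(\sigma_*V)(t)\ova(t)=\sigma_*(V(t))\ova(t)$. Then the $\alpha$-equivariance $R_g^*B=\alpha(g^{-1})B$ together with $\tau\bigl(\alpha(g)h\bigr)=g\tau(h)g^{-1}$ from (\ref{pfid}) — differentiated, this gives $\tau\circ\alpha(g^{-1})=\mathrm{Ad}(g^{-1})\circ\tau$ on $LH$ — yields
$$\tau B\bigl({\tilde\gamma}'(t),{\tilde V}(t)\bigr)=\mathrm{Ad}\bigl(\ova(t)^{-1}\bigr)\tau B_\sigma\bigl(\gamma'(t),V(t)\bigr),$$
which is exactly the integrand in (\ref{oaabv}). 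Note here that the $\ova$ of (\ref{btd}) and the $\overline a$ in the statement satisfy the same ODE with the same initial condition (both describe $\ovA$-parallel transport along $\gamma$), so $\ova=\overline a$; I would remark on this identification.

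Next I would handle the endpoint term $A\bigl({\tilde V}(1)\bigr)$. At $t=1$ one has ${\tilde\gamma}(1)=\sigma(\gamma(1))\ova(1)$, so ${\tilde V}(1)=R_{\ova(1)*}\bigl(\sigma_*(V(1))\bigr)+(\text{vertical})$; more precisely by (\ref{sigstarVvert}) the vertical discrepancy is some ${\tilde\gamma}(1)\xi$ with $\xi\in LG$. Applying the connection form $A$ and using its $G$-equivariance $A\bigl(R_{g*}w\bigr)=\mathrm{Ad}(g^{-1})A(w)$ and $A({\tilde\gamma}(1)\xi)=\xi$, I get $A\bigl({\tilde V}(1)\bigr)=\mathrm{Ad}\bigl(\ova(1)^{-1}\bigr)A_\sigma\bigl(V(1)\bigr)+\xi$. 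The point I must verify is that $\xi=0$, i.e.\ that ${\tilde V}(1)$ and $\sigma_*(V(1))\ova(1)$ actually agree — equivalently that ${\tilde V}(1)$ is $A$-horizontal iff $\sigma_*(V(1))$ is. This is not automatic; it must come from the fact that ${\tilde\sigma}_*V$ is \emph{defined} (Lemma \ref{l:lift}) as the lift with ${\tilde V}(0)=\sigma_*(V(0))$ and with vertical part determined by the tangency equation (\ref{def:vTPAP}). So I would instead track $\ovA\bigl({\tilde V}(t)\bigr)$: from ${\tilde V}(0)=\sigma_*(V(0))$ one does \emph{not} get $\ovA\bigl({\tilde V}(0)\bigr)=0$ in general, rather $\ovA\bigl({\tilde V}(0)\bigr)=\ovA\bigl(\sigma_*V(0)\bigr)=\ovA_\sigma(V(0))$; then (\ref{duh}) propagates this. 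Honestly, the cleanest route to (\ref{oaabv}) is to \emph{define} $\xi:=A\bigl({\tilde V}(1)\bigr)-\mathrm{Ad}(\ova(1)^{-1})A_\sigma(V(1))$ and observe it will be absorbed correctly — but since the proposition asserts the stated closed form, I expect that with the convention ${\tilde V}(0)=\sigma_*(V(0))$ the relation (\ref{sigstarVvert}) at $t=1$ has its vertical part accounted for precisely by the difference of $\ovA$-values, and the $A$-value of the purely-$\ovA$-horizontal-lift piece $(\sigma_*V(1))\ova(1)$ contributes $\mathrm{Ad}(\ova(1)^{-1})A_\sigma(V(1))$ while the vertical correction ${\tilde\gamma}(1)\bigl[\ovA({\tilde V}(1))-0\bigr]$ contributes its $\ovA$-value, which I then carry along. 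Assembling, I obtain (\ref{oaabv}).

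For the second formula (\ref{oaabv2}), the plan is purely algebraic: start from (\ref{oaabv}) and apply the identity $\mathrm{ev}_1^*\ovA-\mathrm{ev}_0^*\ovA=\int_0^1 F^{\ovA}$ from (\ref{nonabstokeschen})/(\ref{duh}), pulled back by $\tilde\sigma$, together with the fact that along an $\ovA$-horizontal path the $\ovA$-parallel transport is generated exactly by $\ova$. Concretely, write $A_\sigma=\ovA_\sigma+(A_\sigma-\ovA_\sigma)$ inside the first term of (\ref{oaabv}), use $\mathrm{Ad}(\ova(1)^{-1})\ovA_\sigma(V(1))=\ovA_\sigma(V(0))+\int_0^1\mathrm{Ad}(\ova(t)^{-1})F^{\ovA}_\sigma(\gamma'(t),V(t))\,dt$ (this is (\ref{duh}) written via $\sigma$-pullbacks and the parallel-transport $\ova$), and then note $\ovA_\sigma(V(0))+\bigl[(A_\sigma-\ovA_\sigma)(V(0))\bigr]=A_\sigma(V(0))$; collecting the $F^{\ovA}$-term with the existing $\tau B$-term under the integral gives $\int_0^1\mathrm{Ad}(\ova(t)^{-1})(F^{\ovA}_\sigma+\tau B_\sigma)(\gamma'(t),V(t))\,dt$, and what remains of the endpoint pieces is exactly the bracketed expression in (\ref{oaabv2}). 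The main obstacle, as flagged above, is the bookkeeping of vertical components at the endpoint $t=1$ — making sure that the $\mathrm{Ad}(\ova(1)^{-1})$ conjugation is applied to precisely the right quantity and that no stray vertical term is dropped; once the equivariance identities for $A$ and for $\tau B$ are set up carefully and the identification $\ova=\overline a$ is made, the rest is the routine application of (\ref{duh}).
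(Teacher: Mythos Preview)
Your treatment of the $B$-term is exactly the paper's: drop vertical components using that $B$ vanishes on verticals, use $R_g^*B=\alpha(g^{-1})B$, then convert $\tau\circ\alpha(g^{-1})$ to ${\rm Ad}(g^{-1})\circ\tau$ via (\ref{pfid}). On this point you and the paper agree verbatim.

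Where you diverge is the endpoint term $A\bigl({\tilde V}(1)\bigr)$. The paper's proof simply says ``we need only focus on the $B$ term'' and treats $A\bigl({\tilde V}(1)\bigr)={\rm Ad}\bigl(\ova(1)^{-1}\bigr)A_\sigma\bigl(V(1)\bigr)$ as evident. You correctly sense that this is not automatic, because the decomposition (\ref{sigstarVvert}) carries a genuine vertical piece $\xi(1)=\ovA\bigl({\tilde V}(1)\bigr)-{\rm Ad}\bigl(\ova(1)^{-1}\bigr)\ovA_\sigma\bigl(V(1)\bigr)$, and $A$ applied to that vertical piece returns $\xi(1)$ itself. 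Your instinct is right: $\xi(1)$ is \emph{not} zero in general. (Take $G$ abelian, $\ovA_\sigma=x\,dy$ on $\mathbb R^2$, $\gamma(t)=(0,t)$, $V=\partial_x$; then $\ova\equiv e$, $\ovA_\sigma(V)\equiv 0$, but ${\tilde V}(t)=\partial_x - t\,\partial_\theta$ so $\xi(1)=-1$.) So the step you were uneasy about really is a gap, both in your sketch and in the paper's one-line dismissal; formula (\ref{oaabv}) as written omits the $\xi(1)$ contribution.

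Your plan for (\ref{oaabv2}) then inherits this problem: the identity you propose, ${\rm Ad}\bigl(\ova(1)^{-1}\bigr)\ovA_\sigma(V(1))=\ovA_\sigma(V(0))+\int_0^1{\rm Ad}\bigl(\ova(t)^{-1}\bigr)F^{\ovA}_\sigma(\gamma',V)\,dt$, fails in the same example (left side $0$, right side $-1$). The correct route to (\ref{oaabv2}) is to \emph{retain} $\xi(1)$: write
\[
A\bigl({\tilde V}(1)\bigr)={\rm Ad}\bigl(\ova(1)^{-1}\bigr)A_\sigma\bigl(V(1)\bigr)+\xi(1)
={\rm Ad}\bigl(\ova(1)^{-1}\bigr)(A_\sigma-\ovA_\sigma)\bigl(V(1)\bigr)+\ovA\bigl({\tilde V}(1)\bigr),
\]
and now substitute $\ovA\bigl({\tilde V}(1)\bigr)=\ovA_\sigma\bigl(V(0)\bigr)+\int_0^1{\rm Ad}\bigl(\ova(t)^{-1}\bigr)F^{\ovA}_\sigma(\gamma',V)\,dt$, which \emph{is} (\ref{duh}) (since ${\tilde V}(0)=\sigma_*V(0)$ and $F^{\ovA}$ is horizontal and equivariant). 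Adding the $\tau B$ integral and rewriting $\ovA_\sigma(V(0))=A_\sigma(V(0))-(A_\sigma-\ovA_\sigma)(V(0))$ gives exactly (\ref{oaabv2}). In other words, (\ref{oaabv2}) is the formula that survives the bookkeeping; the ``cancellation'' the paper mentions between $\ovA_\sigma$ and $F^{\ovA}_\sigma$ is precisely the reabsorption of $\xi(1)$, not a reduction to (\ref{oaabv}).
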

Note that in (\ref{oaabv2}), the terms involving $\ovA_{\sigma}$
and $F^{\ovA}_{\sigma}$ cancel each other out. 

\begin{proof}
 From   the definition of $\omega_{(A,B)}$ in
(\ref{def:omegaAB}) and (\ref{def:Zv}),  we see that we need only
focus on the $B$ term. To this end we have, from
(\ref{tildgamprim}) and (\ref{sigstarVvert}): 
\begin{equation}\label{BtildegamVtild}
\begin{split}
B\bigl({\tilde\gamma'}(t),{\tilde V}(t)\bigr)  &=
B\bigl(\sigma_*\bigl(\gamma'(t)\bigr)\ova(t)
+{\rm vertical},(\sigma_*V)(t)\ova(t)+{\rm vertical} \bigr)\\
&=B\bigl(\sigma_*\bigl(\gamma'(t)\bigr)\ova(t),
(\sigma_*V)(t)\ova(t)\bigr)\\
&=\alpha\bigl(\ova(t)^{-1}\bigr)
B_{\sigma}\bigl(\gamma'(t),V(t)\bigr).
\end{split}\end{equation}
Now recall the relation (\ref{pfid})
$$\hbox{$\tau\bigl(\alpha(g)h\bigr)=g\tau(h)g^{-1}$, for all $g\in 
  G$ and $h\in H$,}$$ which implies
$$\hbox{$\tau\bigl(\alpha(g)K\bigr)={\rm Ad}(g)\tau(K)$ for all
  $g\in G$ and $K\in LH$.}$$ As usual, we are denoting the
derivatives of $\tau$ and $\alpha$ by $\tau$ and $\alpha$
again. Applying this to (\ref{BtildegamVtild}) we have
$$\tau  B\bigl({\tilde\gamma'}(t),{\tilde V}(t)\bigr) = {\rm
  Ad}\bigl(\ova(t)^{-1}\bigr){\tau} 
B_{\sigma}\bigl(\gamma'(t),V(t)\bigr),$$
and this yields  the result.
\end{proof}

Suppose $${\tilde\Gamma}:[0,1]^2\to
P:(t,s)\mapsto\tlG(t,s)=\tlG_s(t)=\tlG^t(s)$$ is smooth, with each
$\tlG_s$ being $\ovA$-horizontal, and the path $s\mapsto\tlG(0,s)$
being $A$-horizontal. Let $\Gamma=\pi\circ\tlG$.  We will need to
use the {\em bi-holonomy} $g(t,s)$ which is specified as follows:
parallel translate $\tlG(0,0)$ along $\Gamma_0|[0,t]$ by $\ovA$,
then up the path $\Gamma^t|[0,s]$ by $A$, back along
$\Gamma_s$-reversed by $\ovA$ and then down $\Gamma^0|[0,s]$ by
$A$; then the resulting point is
\begin{equation}\label{defbihol}
\tlG(0,0)g(t,s).
\end{equation}

The path $$s\mapsto\tlG_s$$ describes parallel transport of the
initial path $\tlG_0$ using the connection ${\rm ev}_0^*A$. In what
follows we will compare this with the
path $$s\mapsto{\hat\Gamma}_s$$ which is the parallel transport of
${\hat\Gamma}_0={\tilde\Gamma}_0$ using the connection ${\rm
  ev}_1^*A$. The following result describes the `difference'
between these two connections.

\begin{prop}\label{P:ptev1a} 
  Suppose $${\tilde\Gamma}:[0,1]^2\to
  P:(t,s)\mapsto\tlG(t,s)=\tlG_s(t)=\tlG^t(s)$$ is smooth, with
  each $\tlG_s$ being $\ovA$-horizontal, and the path
  $s\mapsto\tlG(0,s)$ being $A$-horizontal. Then the parallel
  translate of $\tlG_0$ by the connection ${\rm ev}_1^*A$ along the
  path $[0,s]\to \mpm:u\mapsto\Gamma_u$, where
  $\Gamma=\pi\circ\tlG$, results in $\tlG_s g(1,s)$, with $g(1,s)$
  being the `bi-holonomy' specified as in (\ref{defbihol}).
\end{prop}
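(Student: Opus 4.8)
The plan is to compute the ${\rm ev}_1^*A$-parallel translate of $\tlG_0$ explicitly as a right $G$-translate of the given sheet $\tlG$, and then to recognize the translation factor as the bi-holonomy $g(1,s)$. Write $s\mapsto\hat\Gamma_s$ for the ${\rm ev}_1^*A$-parallel translate of $\tlG_0$ along $u\mapsto\Gamma_u$. By construction each $\hat\Gamma_s$ lies in $\pap$ and projects to $\Gamma_s$, with $\hat\Gamma_0=\tlG_0$. Since $\hat\Gamma_s$ and $\tlG_s$ are two $\ovA$-horizontal lifts of the one path $\Gamma_s$, they differ by a constant element of $G$, so $\hat\Gamma_s(t)=\tlG_s(t)\,k(s)$ for a unique $k(s)\in G$ with $k(0)=e$. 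The whole proposition reduces to showing $k(s)=g(1,s)$; this will simultaneously yield existence of the parallel translate, since $k$ gets characterized by an ODE.

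Next I would extract that ODE. The defining property of the lift is that the velocity field $t\mapsto\partial_s\hat\Gamma_s(t)$ is ${\rm ev}_1^*A$-horizontal, i.e.\ $A\bigl(\partial_s\hat\Gamma_s(1)\bigr)=0$. Setting $\tlG^1(s)=\tlG(1,s)$ and differentiating $\hat\Gamma_s(1)=\tlG^1(s)k(s)$ by the product rule for a product of two curves in a principal bundle — its $A$-value being ${\rm Ad}\bigl(k(s)^{-1}\bigr)A\bigl(\partial_s\tlG^1(s)\bigr)+k(s)^{-1}k'(s)$ — one obtains
$$k(s)^{-1}k'(s)=-{\rm Ad}\bigl(k(s)^{-1}\bigr)A\bigl(\partial_s\tlG^1(s)\bigr),\qquad k(0)=e,$$
which is exactly the equation, in the form of (\ref{btd}), describing $A$-parallel transport along the path $s\mapsto\tlG(1,s)$. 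Equivalently, $s\mapsto\tlG^1(s)k(s)$ is the $A$-horizontal lift of $\Gamma^1|[0,s]$ issuing from $\tlG^1(0)=\tlG(1,0)$.

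Then I would run through the loop defining $g(1,s)$ in (\ref{defbihol}), using at each leg that a right translate of an $\ovA$-horizontal (resp.\ $A$-horizontal) path is again of that type, together with the $G$-equivariance of parallel transport. Starting from $\tlG(0,0)$: $\ovA$-transport along $\Gamma_0|[0,1]$ gives $\tlG_0(1)=\tlG(1,0)$, since $\tlG_0$ is the $\ovA$-horizontal path through $\tlG(0,0)$; then $A$-transport up $\Gamma^1|[0,s]$ gives $\tlG^1(s)k(s)=\tlG_s(1)k(s)$ by the ODE just found; then, as $t\mapsto\tlG_s(t)k(s)$ is $\ovA$-horizontal over $\Gamma_s$ with value $\tlG_s(1)k(s)$ at $t=1$, $\ovA$-transport back along $\Gamma_s$ reversed lands at $\tlG_s(0)k(s)=\tlG(0,s)k(s)$; finally, since by hypothesis $s\mapsto\tlG(0,s)$ is the $A$-horizontal lift of $\Gamma^0$ through $\tlG(0,0)$, the $A$-horizontal lift through $\tlG(0,0)k(s)$ is $s'\mapsto\tlG(0,s')k(s)$, so $A$-transport down $\Gamma^0$ reversed carries $\tlG(0,s)k(s)$ to $\tlG(0,0)k(s)$. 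By the definition (\ref{defbihol}) this endpoint is $\tlG(0,0)g(1,s)$; hence $k(s)=g(1,s)$, and therefore $\hat\Gamma_s=\tlG_s\,k(s)=\tlG_s\,g(1,s)$, as asserted.

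I expect the main obstacle to lie in the bookkeeping of the third paragraph: one must check at each of the four legs of the bi-holonomy loop that the point being transported is honestly of the form (the appropriate horizontal path)$\,\cdot\,k(s)$ for one and the same $k(s)$, which is precisely what allows $G$-equivariance of parallel transport to be invoked; the coincidence that the factor produced by the $A$-horizontal lift of $\Gamma^1$ in the second paragraph is the very same $k(s)$ is the linchpin of the argument. By contrast, the ODE computation of the second paragraph is routine once the product rule $\frac{d}{ds}\bigl(p(s)g(s)\bigr)=(R_{g(s)})_*p'(s)+\frac{d}{du}\big|_{u=0}p(s)g(s)\exp\!\bigl(u\,g(s)^{-1}g'(s)\bigr)$ is written down and the ${\rm Ad}$'s and signs are tracked as in (\ref{btd}).
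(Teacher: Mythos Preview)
Your proposal is correct and follows essentially the same route as the paper's proof: both observe that $\hat\Gamma_s=\tlG_s\cdot(\text{some }b\in G)$ since these are $\ovA$-horizontal lifts of the same path, and both then trace through the four legs of the bi-holonomy loop to identify $b$ with $g(1,s)$. The only difference is that you make the second step explicit via an ODE computation, whereas the paper simply asserts that $s\mapsto\hat\Gamma_s(1)$ is $A$-horizontal as an immediate consequence of the definition of ${\rm ev}_1^*A$-horizontality; your version is more detailed but not a different argument.
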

\begin{proof} 
  Let ${\hat\Gamma}_s$ be the parallel translate of $\tlG_0$ by
  ${\rm ev}_1^*A$ along the path
  $[0,s]\to\mpm:u\mapsto\Gamma_u$. Then the right end point
  ${\hat\Gamma}_s(1)$ traces out an $A$-horizontal path, starting
  at $\tlG_0(1)$. Thus, ${\hat\Gamma}_s(1)$ is the result of
  parallel transporting $\tlG(0,0)$ by $\ovA$ along $\Gamma_0$ then
  up the path $\Gamma^1|[0,s]$ by $A$. If we then parallel
  transport ${\hat\Gamma}_s(1)$ back by $\ovA$ along
  $\Gamma_s|[0,1]$-reversed then we obtain the initial point
  ${\hat\Gamma}_s(0)$. This point is of the form ${\tlG}_s(0)b$,
  for some $b\in G$, and so
$${\hat\Gamma}_s={\tilde\Gamma}_sb.$$ Then, parallel-transporting
${\hat\Gamma}_s(0)$ back down $\Gamma^0|[0,s]$-reversed, by $A$,
produces the point $\tlG(0,0)b$. This shows that $b$ is the
bi-holonomy $g(1,s)$. 
\end{proof}

Now we can turn to determining the parallel-transport process by
the connection $\oab$. With $\tlG$ as above, let now
${\check\Gamma}_s$ be the $\oab$-parallel-translate of $\tlG_0$
along $[0,s]\to\mpm:u\mapsto\Gamma_u$.  Since ${\check\Gamma}_s$
and ${\tlG}_s$ are both $\ovA$-horizontal and project by $\pi_*$
down to $\Gamma_s$, we have
$${\check\Gamma}_s={\hat\Gamma}_sb_s,$$
for some $b_s\in G$. Since $\oab={\rm ev}_1^*A+\tau(Z)$ applied to
the $s$-derivative of ${\check\Gamma}_s$ is $0$, and ${\rm
  ev}_1^*A$ applied to the $s$-derivative of ${\hat\Gamma}_s$ is
$0$, we have
\begin{equation}
b_s^{-1}\partial_sb_s+{\rm
  Ad}(b_s^{-1}){\tau}Z(\partial_s{\hat\Gamma}_s)=0 
\end{equation}
Thus, $s\mapsto b_s$ describes parallel transport by
$\theta^{\sigma}$ where the section $\sigma$ satisfies $\sigma\circ
\Gamma={\hat\Gamma}$.

Since ${\hat\Gamma}_s={\tlG}_sg(1,s)$, we then have
\begin{equation}\begin{split}
    \frac{db_s}{ds}b_s^{-1}& =-{\rm
      Ad}\bigl(g(1,s)^{-1}\bigr){\tau}Z(\partial_s\tlG_s)\\ 
    &=-{\rm Ad}\bigl(g(1,s)^{-1}\bigr)\int_0^1{\tau}B\bigl(
\partial_t\tlG(t,s),\partial_s\tlG(t,s)\bigr)
\,dt\end{split}\end{equation}
To summarize:
\begin{theorem}\label{T:gb} Suppose $${\tilde\Gamma}:[0,1]^2\to
  P:(t,s)\mapsto\tlG(t,s)=\tlG_s(t)=\tlG^t(s)$$ is smooth, with
  each $\tlG_s$ being $\ovA$-horizontal, and the path
  $s\mapsto\tlG(0,s)$ being $A$-horizontal.  Then the parallel
  translate of $\tlG_0$ by the connection $\oab$ along the path
  $[0,s]\to \mpm:u\mapsto\Gamma_u$, where $\Gamma=\pi\circ\tlG$,
  results in
\begin{equation}\label{E:tilgbihol}
  \tlG_s g(1,s)\tau\bigl(h_0(s)\bigr),
\end{equation} 
with $g(1,s)$ being the `bi-holonomy' specified as in
(\ref{defbihol}), and $s\mapsto h_0(s)\in H$ solving the
differential equation
\begin{equation}\label{difeqnhs}
\frac{dh_0(s)}{ds}h_0(s)^{-1}=-{
  \alpha}\bigl(g(1,s)^{-1}\bigr)\int_0^1B\bigl( 
\partial_t\tlG(t,s),\partial_s\tlG(t,s)\bigr)
\,dt 
\end{equation}
with initial condition $h_0(0)$ being the identity in $H$.
\end{theorem}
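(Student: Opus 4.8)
The plan is to assemble Theorem~\ref{T:gb} by composing the three connections appearing in $\oab = {\rm ev}_1^*A + \tau(Z)$ in stages, tracking the group element that records the discrepancy at each stage. First I would recall from Proposition~\ref{P:ptev1a} that the ${\rm ev}_1^*A$-parallel transport of $\tlG_0$ along $u\mapsto\Gamma_u$ is exactly $\tlG_s\, g(1,s)$, where $g(1,s)$ is the bi-holonomy of~(\ref{defbihol}); call this path $\hat\Gamma_s = \tlG_s g(1,s)$. So the only thing left to understand is the ``correction'' produced by the extra $\tau(Z)$ term in $\oab$. Since the $\oab$-horizontal lift $\check\Gamma_s$ and $\hat\Gamma_s$ are both $\ovA$-horizontal lifts of the same base path $\Gamma_s$, they differ by a $G$-valued function: $\check\Gamma_s = \hat\Gamma_s\, b_s$ for some $b_s\in G$ with $b_0 = e$.

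Next I would derive the differential equation for $b_s$. Writing out that $\oab(\partial_s\check\Gamma_s)=0$ while ${\rm ev}_1^*A(\partial_s\hat\Gamma_s)=0$, the ${\rm ev}_1^*A$ parts cancel in the usual gauge-transformation way and one is left with
\begin{equation}
b_s^{-1}\partial_s b_s = -{\rm Ad}(b_s^{-1})\,\tau Z(\partial_s\hat\Gamma_s),
\end{equation}
which exhibits $s\mapsto b_s$ as parallel transport by the connection form $\theta^\sigma$ of~(\ref{E:defthetsigX}) relative to the section $\sigma$ with $\sigma\circ\Gamma = \hat\Gamma$. Substituting $\hat\Gamma_s = \tlG_s g(1,s)$ and using the equivariance~(\ref{E:thetag}), $\theta_{\tlg g} = \alpha(g^{-1})\theta_{\tlg}$, together with $\tau(\alpha(g)K) = {\rm Ad}(g)\tau(K)$ from~(\ref{pfid}), converts this into
\begin{equation}
\frac{db_s}{ds}b_s^{-1} = -{\rm Ad}\bigl(g(1,s)^{-1}\bigr)\int_0^1 \tau B\bigl(\partial_t\tlG(t,s),\partial_s\tlG(t,s)\bigr)\,dt,
\end{equation}
exactly the computation already displayed just before the theorem statement.

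The final step is to recognize that this equation for $b_s$ is the $\tau$-image of the $H$-valued equation~(\ref{difeqnhs}): if $h_0(s)$ solves $\frac{dh_0}{ds}h_0^{-1} = -\alpha(g(1,s)^{-1})\int_0^1 B(\partial_t\tlG,\partial_s\tlG)\,dt$ with $h_0(0)=e$, then applying the homomorphism $\tau$ (and using that $\tau$ intertwines $\alpha$ with ${\rm Ad}$ as above, so that $\tau$ carries the right-hand side of the $H$-equation to the right-hand side of the $b_s$-equation) shows $s\mapsto\tau(h_0(s))$ satisfies the same ODE as $b_s$ with the same initial value $e$; by uniqueness of solutions, $b_s = \tau(h_0(s))$. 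Therefore $\check\Gamma_s = \hat\Gamma_s\,\tau(h_0(s)) = \tlG_s\, g(1,s)\,\tau(h_0(s))$, which is~(\ref{E:tilgbihol}). I expect the main subtlety to be the bookkeeping in the step that identifies $b_s$ with transport by $\theta^\sigma$ for the moving section $\hat\Gamma$ and then re-expresses everything back in terms of the original horizontal family $\tlG_s$ via the bi-holonomy $g(1,s)$ — i.e.\ getting the conjugation/equivariance factors ${\rm Ad}(g(1,s)^{-1})$ versus $\alpha(g(1,s)^{-1})$ in the right places and confirming they are compatible under $\tau$. Everything else is the standard ODE-uniqueness argument for lifting paths.
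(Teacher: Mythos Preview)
Your proposal is correct and follows essentially the same route as the paper: the paper's argument (given in the paragraphs immediately preceding the theorem statement) likewise invokes Proposition~\ref{P:ptev1a} to obtain $\hat\Gamma_s=\tlG_s g(1,s)$, writes $\check\Gamma_s=\hat\Gamma_s b_s$, derives the ODE $b_s^{-1}\partial_s b_s=-{\rm Ad}(b_s^{-1})\tau Z(\partial_s\hat\Gamma_s)$, and then uses the equivariance to rewrite this in terms of $\tlG_s$ with the ${\rm Ad}\bigl(g(1,s)^{-1}\bigr)$ factor. Your final ODE-uniqueness step identifying $b_s=\tau(h_0(s))$ makes explicit a point the paper leaves implicit when it passes from the $b_s$-equation to the statement~(\ref{E:tilgbihol}) in terms of $h_0$.
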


Let $\sigma$ be a smooth section of the bundle $P\to M$ in a
neighborhood of $\Gamma([0,1]^2)$.

Let $a_t(s)\in G$ specify parallel transport by $A$ up the path
$[0,s]\to M :v\mapsto \Gamma(t,v)$, i.e.,  the
$A$-parallel-translate of $\sigma\Gamma(t,0)$ up the path $[0,s]\to
M :v\mapsto \Gamma(t,v)$ results in $\sigma(\Gamma(t,s))a_t(s)$.

On the other hand, $\ova_s(t)$ will specify parallel transport by
$\ovA$ along $[0,t]\to M:u\mapsto\Gamma(u,s)$. Thus,
\begin{equation}\label{E:tilgtssig}
{\tilde \Gamma}(t,s)=\sigma\bigl(\Gamma(t,s)\bigr)a_0(s){\ova}_s(t)
\end{equation}

The bi-holonomy is given by
$$g(1,s)=a_0(s)^{-1}\ova_s(1)^{-1}a_1(s)\ova_0(1).$$

Let us look at parallel-transport along the path
$s\mapsto\Gamma_s$, by the connection $\omega_{(A,B)}$, in terms of
the trivialization $\sigma$. Let $ {\hat\Gamma}_s\in\pap$ be
obtained by parallel transporting
${\tilde\Gamma}_0={\tilde\sigma}(\Gamma_0)\in\pap$ along the path
$$[0,s]\to M: u\mapsto\Gamma^0(u)=\Gamma(0,u).$$

This transport is described through a map
$$[0,1]\to G:s\mapsto c(s),$$
specified through
\begin{equation}\label{tildGamssig}
{\hat\Gamma}_s={\tilde\sigma}(\Gamma_s)c(s)
={\tilde\Gamma}_sa_0(s)^{-1}c(s).
\end{equation}
Then  $c(0)=e$ and
\begin{equation}\label{csp}
c(s)^{-1}c'(s)=-{\rm
Ad}\bigl(c(s)^{-1}\bigr)\omega_{(\ovA,A,B)}\bigl(V(s)\bigr),
\end{equation}
where $V_s\in T_{\Gamma_s}\mpm$ is the vector field along
$\Gamma_s$ given by
$$ V_s(t)=V(s,t)=\partial_s\Gamma(t,s)\qquad\hbox{for all $t\in
[0,1]$.}$$ Equation (\ref{csp}), written out in more detail, is
\begin{equation}\label{cps}
\begin{split}
  c(s)^{-1}c'(s)&= -{\rm Ad}\bigl(c(s)^{-1}\bigr)\Bigl[
  {\rm Ad}\bigl(\ova_s(1)^{-1}\bigr)A_{\sigma}\bigl(V_s(1)\bigr) \\
  &\hskip 1.25in{}+ \int_0^1 {\rm Ad}\bigl({\overline
    a}_s(t)^{-1}\bigr){\tau}B_{\sigma}(\Gamma_s'(t),
  V_s(t)\bigr)\,dt \Bigr],\end{split}
\end{equation} 
where ${\overline a}_s(t)\in G$ describes
$\ovA_{\sigma}$-parallel-transport along $\Gamma_s|[0,t]$.  By
(\ref{E:tilgbihol}),
$c(s)$ is given by
$$c(s)=a_0(s)g(1,s)\tau(h_0(s)),$$
where $s\mapsto h_0(s)$ solves
\begin{equation}\label{defeqnhs2}
  \frac{dh_0(s)}{ds}h_0(s)^{-1}=- \int_0^1
  \alpha\bigl({\overline
    a}_s(t)a_0(s)g(1,s)\bigr)^{-1}B_{\sigma}\bigl( 
\partial_t\Gamma(t,s),\partial_s\Gamma(t,s)\bigr)
\,dt, 
\end{equation}
with initial condition $h_0(0)$ being the identity in $H$. The
geometric meaning of ${\overline a}_s(t)a_0(s)$ is that it
describes parallel-transport first by $A_{\sigma}$ up from $(0,0)$
to $(0,s)$ and then to the right by $\ovA_{\sigma}$ from $(0,s)$ to
$(t,s)$.

\section{Two categories from plaquettes}\label{S:CatPlaq}

In this section we introduce two categories motivated by the
differential geometric framework we have discussed in the preceding
sections. We show that the geometric framework naturally connects
with certain category theoretic structures introduced by Ehresmann
\cite{Ehr1, Ehr2} and developed further by Kelley and Street
\cite{KS}.

We work with the pair of Lie groups $G$ and $H$, along with maps
$\tau$ and $\alpha$ satisfying (\ref{pfid}), and construct two
categories.  These categories will have the same set of objects,
and also the same set of morphisms.

The set of objects is simply the group $G$:
$${\bf Obj}=  G.$$   The set of morphisms
is
$${\bf Mor}=  G^4\times H,$$
with a typical element denoted $$(a,b,c,d; h).$$
It is convenient to visualize a morphism as a plaquette labeled
with elements of $G$:

\begin{figure}[ht]
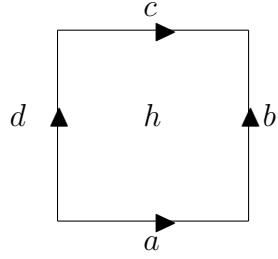

\begin{center}

\setlength{\unitlength}{.5in}

 \plaq{a}{b}{c}{d}{h}
 \end{center}
    \caption{Plaquette}
    \label{fig:Plaquette}
\end{figure}

To connect with the theory of the preceding sections, we should
think of $a$ and $c$ as giving $\ovA$-parallel-transports, $d$ and
$b$ as $A$-parallel-transports, and $h$ should be thought of as
corresponding to $h_0(1)$ of Theorem \ref{T:gb}.  However, this is
only a rough guide; we shall return to this matter later in this
section.

For the category {\bf Vert}, the   {\em source} (domain) and {\em
target} (co-domain) of a morphism are:

\begin{eqnarray*}
s_{\bf Vert}(a,b,c,d;h)
&=& a\\
t_{\bf Vert}(a,b,c,d;h) &=&c\end{eqnarray*}

For the category {\bf Horz}

\begin{eqnarray*}
s_{\bf Horz}(a,b,c,d;h)
&=&d\\
t_{\bf Horz}(a,b,c,d;h) &=&b\end{eqnarray*}

We define vertical composition, that is composition in {\bf Vert},
using Figure \ref{fig:vertcomp}. In this figure, the upper morphism
is being applied first and then the lower.

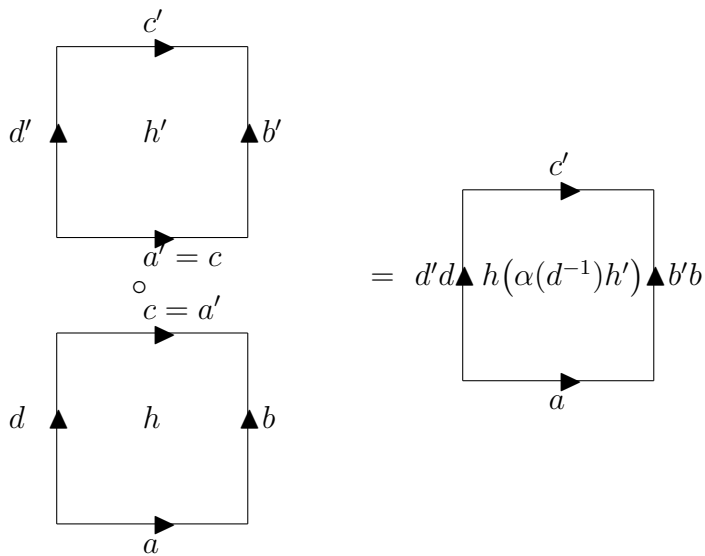
\begin{figure}[ht]
\begin{center}

\setlength{\unitlength}{.5in}

\begin{picture}(6.5,6)(0.5,1)
  \put(-1,1){\plaq{a}{b}{c=a'}{d}{h}} \put(.5,3.4){$\circ$}
  \put(3,3.5){=} \put(3.25,2.5){\plaq{a}{b'b}{c'}{d'd}{\hskip -.35in
      h\bigl(\alpha(d^{-1})h'\bigr) }}
\put(-1,4){\plaq{a'=c}{b'}{c'}{d'}{h'}}

\end{picture}
\end{center}
    \caption{Vertical Composition}
    \label{fig:vertcomp}
\end{figure}

Horizontal composition is specified through Figure
\ref{fig:horzcomp}. In this figure we have used the notation
$\circ_{\rm opp}$ to stress that, as morphisms, it is the one to
the left which is applied first and then the one to the right.

\begin{figure}[ht]
\begin{center}

\setlength{\unitlength}{.5in}

\begin{picture}(6.5,6)(2,1)
\put(-1,1){\plaq{a}{  b}{c}{d}{h}} \put(2.7,2){$\circ_{\rm opp}$}

\put(3.2,1){\plaq{a'}{b'}{c'}{d'}{h'}}

\put(6.4,2){=} \put(6.5,1){\plaq{a'a}{b'}{c'c}{d}{\hskip -.35in
\bigl(\alpha(a^{-1})h'\bigr)h}}

\end{picture}
\end{center}
    \caption{Horizontal Composition (for  $b=d'$).}
    \label{fig:horzcomp}
\end{figure}
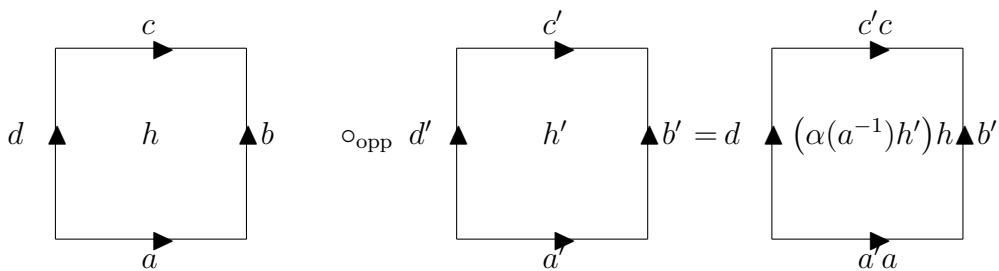

 Our first observation is:

 \begin{prop}\label{p:cat}
   Both {\bf Vert} and {\bf Horz} are categories, under the
   specified composition laws. In both categories, all morphisms are
   invertible.
 \end{prop}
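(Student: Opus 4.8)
The plan is to verify the category axioms directly for each of the two composition laws, treating \textbf{Vert} and \textbf{Horz} in parallel since the two composition rules have the same algebraic shape (one is obtained from the other by a relabelling of which $G$-coordinates play the role of ``source/target'' versus ``side decorations,'' together with swapping $a\leftrightarrow d$ and the left/right order of application). First I would record the identity morphisms: for \textbf{Vert}, the identity on an object $a\in G$ should be the plaquette $(a,e,a,e;e)$ — i.e. trivial top/bottom reparametrizations and trivial $H$-label — and one checks using Figure~\ref{fig:vertcomp} that composing this on either side with a morphism $(a,b,c,d;h)$ returns $(a,b,c,d;h)$, using $\alpha(e)h=h$, $\alpha(g^{-1})e=e$, $he=eh=h$. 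Similarly for \textbf{Horz} the identity on $d\in G$ is $(e,d,e,d;e)$, and the analogous one-line check works via Figure~\ref{fig:horzcomp}.

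Next I would check that source and target are respected by composition — this is immediate from the figures, since in vertical composition the composite of $(a,b,c,d;h)$ (applied first) with $(a'=c,b',c',d';h')$ has source $a$ and target $c'$, matching $s_{\bf Vert}$ of the first and $t_{\bf Vert}$ of the second; likewise horizontally with the constraint $b=d'$. The substantive point is associativity. For \textbf{Vert}, given three composable plaquettes I would compute $(P_3\circ P_2)\circ P_1$ and $P_3\circ(P_2\circ P_1)$ and compare the five components. The $G$-components in the ``source/target'' slots ($a$ and the final $c$) and the side slots compose by ordinary group multiplication, so associativity there is just associativity in $G$. The only component requiring real work is the $H$-label. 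Writing the first composite's $H$-label as $h\cdot\bigl(\alpha(d^{-1})h'\bigr)$ and iterating, associativity of the $H$-labels reduces to an identity of the form
\[
\Bigl(h\,\alpha(d^{-1})h'\Bigr)\,\alpha\bigl((d'd)^{-1}\bigr)h''
= h\,\alpha(d^{-1})\Bigl(h'\,\alpha(d'^{-1})h''\Bigr),
\]
which follows from associativity in $H$ together with the fact that each $\alpha(g)$ is a group automorphism of $H$ (so $\alpha(d^{-1})$ distributes over the product $h'\,\alpha(d'^{-1})h''$) and $\alpha$ is an anti-/homomorphism in the $G$-slot in the appropriate sense, $\alpha(d^{-1})\alpha(d'^{-1}) = \alpha((d'd)^{-1})$. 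The horizontal case is the mirror image, using $\alpha(a^{-1})$ in place of $\alpha(d^{-1})$ and left-multiplication order.

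Finally, invertibility: given $(a,b,c,d;h)$ I would exhibit its inverse in \textbf{Vert} as $(c,b^{-1},a,d^{-1};h^{-1}_{\star})$ where the correct $H$-entry is dictated by forcing the composite $H$-label to be $e$; solving $h\cdot\alpha(d^{-1})h_\star = e$ gives $h_\star = \alpha(d)(h^{-1})$, and one checks the composite in the other order is also the identity using the two relations in (\ref{pfid}) — in particular $\alpha(\tau(h))h' = hh'h^{-1}$ is the identity that makes the ``other order'' computation close. The analogous inverse in \textbf{Horz} uses $a$ in place of $d$. I expect the main obstacle to be purely bookkeeping: getting the $\alpha$-twists and the left-versus-right multiplication conventions exactly consistent with Figures~\ref{fig:vertcomp} and~\ref{fig:horzcomp}, since a sign or order error in the $H$-component propagates through both the associativity identity and the inverse formula; once the twisted-product structure on $G^4\times H$ is written cleanly (it is essentially a double semidirect-type product), every axiom is a short computation.
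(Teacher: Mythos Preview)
Your approach is correct and essentially identical to the paper's: both verify associativity directly, record the identities $(a,e,a,e;e)$ for \textbf{Vert} and $(e,a,e,a;e)$ for \textbf{Horz}, and exhibit the explicit inverses $(c,b^{-1},a,d^{-1};\alpha(d)h^{-1})$ and $(a^{-1},d,c^{-1},b;\alpha(a)h^{-1})$. One small correction: the Peiffer identity $\alpha(\tau(h))h'=hh'h^{-1}$ from (\ref{pfid}) is not actually needed for the ``other order'' inverse check --- that computation closes using only that $\alpha:G\to{\rm Aut}(H)$ is a group homomorphism (so $\alpha(d^{-1})\circ\alpha(d)={\rm id}$ and $\alpha(d)(h^{-1}h)=e$), and indeed the paper's proof does not invoke (\ref{pfid}) here at all.
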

 \begin{proof}It is straightforward to verify that the composition
   laws are associative. The identity map $a\to a$ in {\bf Vert} is
   $(a,e,a,e;e)$, and in {\bf Horz} it is $(e,a,e,a; e)$. These are
   displayed in in Figure \ref{fig:id}.  The inverse of the
   morphism $(a,b,c,d;h)$ in ${\bf Vert}$ is $(c,b^{-1},a,d^{-1};
   \alpha(d)h^{-1})$; the inverse in {\bf Horz} is
   $(a^{-1},d,c^{-1},b;\alpha(a)h^{-1})$. \end{proof}

 The two categories are isomorphic, but it is best not to identify
 them.

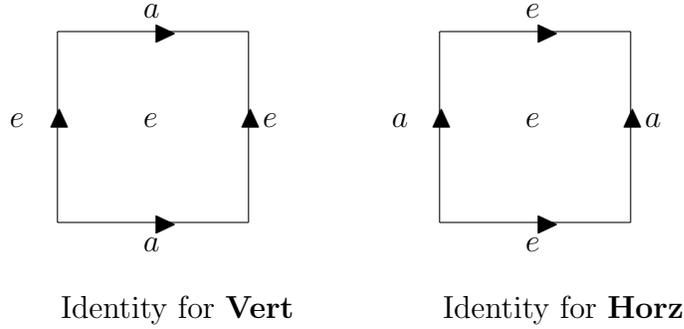
\begin{figure}[ht]
\begin{center}

\setlength{\unitlength}{.5in}

\begin{picture}(6.5,6)(2,1)
\put(1,2){\plaq{a}{  e}{a}{e}{e}} \put(1.75,1) {\mbox{Identity for
{\bf Vert}}}

\put(5,2){\plaq{e}{a}{e}{a}{  e}} \put(5.75,1) {\mbox{Identity for
{\bf Horz}}}
\end{picture}
\end{center}
    \caption{Identity Maps.}
    \label{fig:id}
\end{figure}

We use $\circ_H$ to denote horizontal composition,
and $\circ_V$ to denote vertical composition.

We have seen earlier that if $A$, $\ovA$ and $B$ are such that
$\oab$ reduces to ${\rm ev}_0^*A$ (for example, if $A=\ovA$ and
$F^\ovA+\tau(B)$ is $0$) then all plaquettes $(a,b,c,d;h)$ arising
from the connections $A$ and $\omega_{(A,B)}$,
satisfy $$\tau(h)=a^{-1}b^{-1}cd.$$ Motivated by this observation,
we could consider those morphisms $(a,b,c,d;h)$ which satisfy
\begin{equation}\label{strictquasiflat}
\tau(h)=a^{-1}b^{-1}cd
\end{equation}
However, we can look at a broader class of morphisms as
well. Suppose
$$\underbar{h}\mapsto z({\underbar{h}})\in Z(G)$$
is a mapping of the morphisms in the category {\bf Horz} or in {\bf 
  Vert} into the center $Z(G)$ of $G$, which carries composition of
morphisms to products in $Z(G)$:
$$z(\underbar{h}\circ\underbar{h}')=
z(\underbar{h})z(\underbar{h}').
$$
Then we say that a morphism $\underbar{h}=(a,b,c,d;h)$ is    {\em
quasi-flat} with respect to $z$ if
\begin{equation}\label{quasiflat}
\tau(h)=(a^{-1}b^{-1}cd) z({\underbar{h}})
\end{equation}

A larger class of morphisms could also be considered, by replacing
$Z(G)$ by an abelian normal subgroup, but we shall not explore this
here.

\begin{prop}\label{P:quasiflat} Composition of quasi-flat morphisms
  is quasi-flat.  Thus, the quasi-flat morphisms form a subcategory
  in both {\bf Horz} and {\bf Vert}.
\end{prop}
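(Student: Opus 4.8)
The plan is to verify directly that the two composition laws preserve the quasi-flat condition \eqref{quasiflat}, handling \textbf{Vert} and \textbf{Horz} separately but in parallel fashion. In each case one takes two morphisms that are individually quasi-flat and computes the $\tau(h)$-component of their composite, then checks that it equals the product of the corresponding $G$-words times the value of $z$ on the composite.

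First I would treat \textbf{Vert}. Let $\underbar h=(a,b,c,d;h)$ and $\underbar h'=(a',b',c',d';h')$ be composable in \textbf{Vert}, so $c=a'$, and quasi-flat: $\tau(h)=(a^{-1}b^{-1}cd)z(\underbar h)$ and $\tau(h')=(a'^{-1}b'^{-1}c'd')z(\underbar h')$. By Figure \ref{fig:vertcomp} the composite is $(a,b'b,c',d'd;\,h(\alpha(d^{-1})h'))$. Applying $\tau$ and using that $\tau$ is a homomorphism together with the identity $\tau(\alpha(g)h)=g\tau(h)g^{-1}$ from \eqref{pfid}, I get $\tau(h(\alpha(d^{-1})h'))=\tau(h)\,d^{-1}\tau(h')\,d$. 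Now substitute the quasi-flat expressions, use that $z$-values lie in $Z(G)$ so they commute past everything, use $c=a'$ to telescope $\dots d\,d^{-1}\dots = \dots$, and collect: the result should be $(a^{-1}(b'b)^{-1}c'(d'd))\,z(\underbar h)z(\underbar h')$, which by the multiplicativity hypothesis $z(\underbar h\circ\underbar h')=z(\underbar h)z(\underbar h')$ is exactly the quasi-flat condition for the composite. The analogous computation for \textbf{Horz} uses the composite $(a'a,b',c'c,d;\,(\alpha(a^{-1})h')h)$ from Figure \ref{fig:horzcomp} (with $b=d'$), giving $\tau((\alpha(a^{-1})h')h)=a^{-1}\tau(h')a\,\tau(h)$, and the same substitute-and-telescope argument closes it.

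The main obstacle is bookkeeping: one must carefully track the order of the non-commuting $G$-factors and confirm that the rearrangements needed to bring the expression into the form $(\text{$G$-word})\,z$ really are legitimate — the only tools available are that $\tau$ is a homomorphism, the two identities in \eqref{pfid}, the relation $c=a'$ (resp.\ $b=d'$) forced by composability, and the centrality of $z$-values. I expect the centrality of $z(\underbar h)$ to be doing the essential work, since without it the conjugations by $d$ (resp.\ $a$) would not pass through cleanly. Once the algebra checks out, the closure statement — that quasi-flat morphisms form a subcategory of each of \textbf{Horz} and \textbf{Vert} — is immediate, since identity morphisms are trivially quasi-flat (take $z$ of an identity to be $e$, forced by multiplicativity) and the subcollection is closed under the ambient composition, with associativity and invertibility inherited from Proposition \ref{p:cat}.
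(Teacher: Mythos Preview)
Your proposal is correct and follows essentially the same approach as the paper: both proofs apply $\tau$ to the $H$-component of the composite, invoke the identity $\tau(\alpha(g)h)=g\tau(h)g^{-1}$ from \eqref{pfid}, substitute the quasi-flat conditions for the factors, use centrality of the $z$-values to commute them past the $G$-factors, and telescope using the composability constraint ($c=a'$ or $b=d'$). The only cosmetic difference is that the paper treats {\bf Horz} first and {\bf Vert} second, while you reverse the order.
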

\begin{proof} Let $\underbar{h}=(a,b,c,d;h)$ and
  $\underbar{h}'=(a',b',c',d';h')$ be quasi-flat morphisms in {\bf
    Horz}, 
such that the horizontal composition $
 \underbar{h}'\circ_{H}\underbar{h}$ is defined, i.e., $b=d'$. Then
$$
 \underbar{h}'\circ_{H}\underbar{h}  =
 (a'a,b',c'c,d;\{\alpha(a^{-1})h'\}h).$$
 Applying $\tau$ to the last component in this, we have
\begin{equation}\begin{split}
a^{-1}\tau(h')a\tau(h) &=a^{-1}({a'}^{-1}{b'}^{-1}c'd')
 a(a^{-1}b^{-1}cd) z({\underbar{h}})z({\underbar{h}'})
 \\
 &= \bigl((a'a)^{-1}{b'}^{-1}(c'c)d\bigr)
 z({\underbar{h}}'\circ_H
 {\underbar{h}}),
 \end{split}
 \end{equation}
 which says that $\underbar{h}'\circ_{H}\underbar{h} $ is
 quasi-flat. 

  Now suppose $\underbar{h}=(a,b,c,d;h)$ and
  $\underbar{h}'=(a',b',c',d';h')$ are quasi-flat morphisms in {\bf 
    Vert}, 
such that the vertical composition $
 \underbar{h}'\circ_{V}\underbar{h}$ is defined, i.e., $c=a'$. Then
$$
 \underbar{h}'\circ_{V}\underbar{h}  =
 (a,b'b,c',d'd;h\{\alpha(d^{-1})h'\}).$$
 Applying $\tau$ to the last component in this, we have
\begin{equation}\begin{split}
\tau(h)d^{-1}\tau(h')d
&=(a^{-1}b^{-1}cd)d^{-1}({a'}^{-1}{b'}^{-1}c'd') 
 d  z({\underbar{h}})z({\underbar{h}'})
 \\
 &= \bigl({a'}^{-1}(b'b)^{-1}c'd'd\bigr)
 z({\underbar{h}}'\circ_V
 {\underbar{h}}),
 \end{split}
 \end{equation}
 which says that $\underbar{h}'\circ_{V}\underbar{h} $ is
 quasi-flat. 
\end{proof}

For a morphism $\underbar{h}=(a,b,c,d;h)$ we set
$$\tau(\underbar{h})=\tau(h).$$
If $\underbar{h}=(a,b,c,d;h)$ and $\underbar{h}'=(a',b',c',d';h')$
are morphisms then 
we say that they are $\tau$-equivalent,
$$\underbar{h}=_{\tau}\underbar{h}'$$
if $a=a'$, $b=b'$., $c=c'$, $d=d'$, and $\tau(h)=\tau(h')$.

\begin{prop}\label{p:hv} If
 $\underbar{h},\underbar{h}',\underbar{h}'',\underbar{h}''$ are
 quasi-flat morphisms for which the 
compositions on both sides of (\ref{hvcomp}) are meaningful, then
\begin{equation}\label{hvcomp}
(\underbar{\rm h}'''\circ_H\underbar{\rm h}'')
\circ_V(\underbar{\rm h}'\circ_H\underbar{\rm h})=_{\tau}
(\underbar{\rm h}'''\circ_V\underbar{\rm h}')\circ_H
(\underbar{\rm h}''\circ_V\underbar{\rm h}) \end{equation}
whenever all the compositions on both sides are meaningful.
\end{prop}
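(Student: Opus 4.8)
The plan is to verify the identity \eqref{hvcomp} by computing both sides explicitly, using the horizontal and vertical composition formulas from Figures \ref{fig:horzcomp} and \ref{fig:vertcomp}, and then checking that the resulting five-tuples agree in their first four ($G$-valued) slots and that the two $H$-components have the same image under $\tau$. Write $\underbar{h}=(a,b,c,d;h)$, $\underbar{h}'=(a',b',c',d';h')$, $\underbar{h}''=(a'',b'',c'',d'';h'')$, $\underbar{h}'''=(a''',b''',c''',d''';h''')$. Meaningfulness of the four composites forces the matching conditions: for $\underbar{h}'\circ_H\underbar{h}$ we need $b=d'$; for $\underbar{h}'''\circ_H\underbar{h}''$ we need $b''=d'''$; for the outer vertical composite on the left we need the target of the bottom row to equal the source of the top row, i.e.\ $c=a''$ and $c'=a'''$ (so that the $\circ_V$ of the two horizontally-composed plaquettes is defined); symmetric conditions appear on the right side. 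I would first record all these constraints, since they are what make the two sides have the same $G$-labels.

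Next I would carry out the left-hand side: first form $\underbar{h}'\circ_H\underbar{h}=(a'a,\,b',\,c'c,\,d;\,\{\alpha(a^{-1})h'\}h)$ and $\underbar{h}'''\circ_H\underbar{h}''=(a'''a'',\,b''',\,c'''c'',\,d'';\,\{\alpha(a''^{-1})h'''\}h'')$, then compose these vertically. The vertical composite of $(\mathsf a,\mathsf b,\mathsf c,\mathsf d;\mathsf h)$ (bottom) with $(\mathsf a',\mathsf b',\mathsf c',\mathsf d';\mathsf h')$ (top) is $(\mathsf a,\mathsf b'\mathsf b,\mathsf c',\mathsf d'\mathsf d;\,\mathsf h\{\alpha(\mathsf d^{-1})\mathsf h'\})$, so the left side becomes a tuple whose $G$-entries are $(a'a,\ b'''b',\ c'''c'',\ d''d)$ and whose $H$-entry is $\big(\{\alpha(a^{-1})h'\}h\big)\big\{\alpha(d^{-1})\,[\{\alpha(a''^{-1})h'''\}h'']\big\}$. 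Then I would compute the right-hand side symmetrically: form the two vertical composites $\underbar{h}'''\circ_V\underbar{h}'=(a',b'''b',c''',d'''d';\,h'\{\alpha(d'^{-1})h'''\})$ and $\underbar{h}''\circ_V\underbar{h}=(a,b''b,c'',d''d;\,h\{\alpha(d^{-1})h''\})$, and then horizontally compose them, using $d'=b$ and (from the vertical-composability on the right) the appropriate index matches, to get $G$-entries $(a'a,\ b''' b',\ c''' c'',\ d'' d)$ — which already agree with the left side — and $H$-entry $\big(\alpha(a^{-1})[h'\{\alpha(d'^{-1})h'''\}]\big)\big(h\{\alpha(d^{-1})h''\}\big)$.

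It then remains to show the two $H$-components are $\tau$-equivalent, i.e.\ have equal image under $\tau$. Here is where quasi-flatness enters and where I expect the only real work to be: applying $\tau$ to each $H$-word and using $\tau(\alpha(g)k)=\mathrm{Ad}(g)\tau(k)$ together with the quasi-flatness relations $\tau(h)=(a^{-1}b^{-1}cd)z(\underbar h)$ (and the multiplicativity $z(\underbar h\circ\underbar h')=z(\underbar h)z(\underbar h')$, plus centrality of the $z$-values), one converts both $\tau$-images into words in the $G$-labels times a product of central elements. Since Proposition \ref{P:quasiflat} already guarantees each composite is quasi-flat, both $\tau(\text{LHS})$ and $\tau(\text{RHS})$ must equal $\big((a'a)^{-1}(b'''b')^{-1}(c'''c'')(d''d)\big)$ times the same product $z(\underbar h)z(\underbar h')z(\underbar h'')z(\underbar h''')$ of central elements — the factorization is forced and independent of the order of composition because $Z(G)$ is abelian. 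Thus the two $\tau$-images coincide, the first four slots coincide, and \eqref{hvcomp} follows. The main obstacle is purely bookkeeping: keeping the conjugations $\mathrm{Ad}(a^{-1})$, $\mathrm{Ad}(d^{-1})$ straight and confirming they land consistently on both sides; I would handle this by pushing everything through $\tau$ at the outset, since after applying $\tau$ the identity \eqref{pfid} turns all the $\alpha$-twists into ordinary adjoint actions in $G$, where associativity makes the comparison transparent.
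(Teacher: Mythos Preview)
Your proposal is correct and follows essentially the same route as the paper: compute both composites explicitly, verify that the four $G$-entries agree, and then check that the two $H$-components have equal image under $\tau$. Your final step---invoking Proposition~\ref{P:quasiflat} to conclude that both composites are quasi-flat, so that each $\tau$-image equals the common boundary word $(a'a)^{-1}(b'''b')^{-1}(c'''c'')(d''d)$ times the common central product $z(\underbar{h})z(\underbar{h}')z(\underbar{h}'')z(\underbar{h}''')$---is a mild streamlining of the paper's direct expansion and simplification of $\tau(h^*)$ and $\tau(h_*)$, but the underlying argument is the same.
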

Thus, the structures we are using here correspond to double
categories as described by Kelly and Street \cite[section 1.1]{KS}

\begin{proof} This is a lengthy but straight forward verification.
  We refer to Figure \ref{fig:window}. For a morphism
  $\underbar{h}=(a,b,c,d;h)$, let us write 
$$\tau_{\partial}(\underbar{h})=a^{-1}b^{-1}cd.$$
For the left side of (\ref{hvcomp} ), we have
\begin{equation}\begin{split}
(\underbar{h}'\circ_H\underbar{h})
&=(a'a,b',c'c,d;\{\alpha(a^{-1})h'\}h)\\ 
(\underbar{h}'''\circ_H\underbar{h}'')
&= (c'c,b'',f'f,d'; \{\alpha(c^{-1})h'''\}h'')\\
\underbar{h}^*\stackrel{\rm def}{=}
(\underbar{h}'''\circ_H\underbar{h}'')
\circ_V(\underbar{h}'\circ_H\underbar{h})&=
(a'a,b''b',f'f,d'd; h^*),
\end{split}\end{equation}
where
\begin{equation}\label{hupstar}
h^* = \{\alpha(a^{-1})h'\}h\{\alpha(d^{-1}c^{-1})h'''\}
\{\alpha(d^{-1})h''\}\end{equation}
Applying $\tau$ gives
\begin{equation}\label{tauupstar}
\begin{split}
\tau(h^*) &= a^{-1}\tau(h')z(\underbar{h'})a\cdot
\tau(h)z(\underbar{h})d^{-1}c^{-1}\tau(h''')cd\cdot\\
&\hskip 2in
z(\underbar{h}''')\cdot d^{-1}\tau(h'')dz(\underbar{h}'')\\ 
&=(a'a)^{-1}(b''b')^{-1}(f'f)(d'd)z(\underbar{h}^*),
\end{split}\end{equation}
where we have used the fact, from (\ref{pfid}), that $\alpha$ is
converted to a conjugation on applying $\tau$, and the last line
follows after algebraic simplification. Thus, 
\begin{equation}\label{tauhupstar}\tau(h^*)=\tau_{
\partial}(\underbar{h}^*)z(\underbar{h}^*)\end{equation}
On the other hand, by an entirely similar computation, we obtain 
\begin{equation}
\underbar{h}_*\stackrel{\rm
  def}{=}(\underbar{h}'''\circ_V\underbar{h}')\circ_H 
(\underbar{h}''\circ_V\underbar{h})=
(a'a,b''b',f'f,d'd; h_*),
\end{equation}
where
\begin{equation}\label{hlowerstar}
h_* = \{\alpha(a^{-1})h'\} \{\alpha(a^{-1}b^{-1})h'''\}h
\{\alpha(d^{-1})h''\}\end{equation}
Applying $\tau$ to this yields, after using (\ref{pfid}) and
computation, 
$$\tau(h_*)=\tau_{\partial}(\underbar{h}_*)z(\underbar{h}_*) 
$$
Since $\tau(h_*)$ is equal to $\tau(h^*)$, the  result
(\ref{hvcomp}) follows. 
\end{proof}

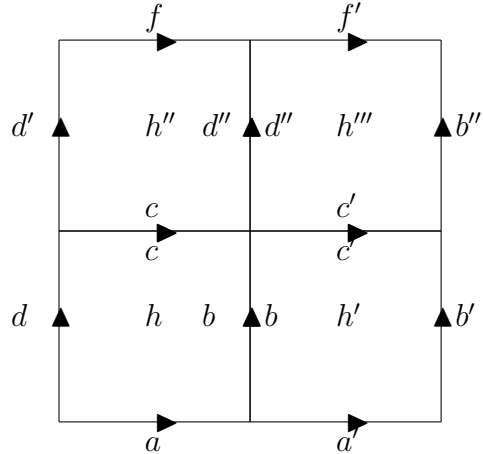
\begin{figure}[ht]
\begin{center}

\setlength{\unitlength}{.5in}

\begin{picture}(6.5,6)(2,1)
\put(2,3){\plaq{c}{  d''}{f}{d'}{h''}}


\put(4,3){\plaq{c'}{b''}{f'}{d''}{h'''}}
\put(2,1){\plaq{a}{  b}{c}{d}{h}}

\put(4,1){\plaq{a'}{b'}{c'}{b}{h'}}

\end{picture}
\end{center}
    \caption{Consistency of Horizontal and Vertical Compositions.}
    \label{fig:window}
\end{figure}

Ideally, a discrete model would be the exact `integrated' version
of the differential geometric connection $\oab$. However, it is not
clear if such an ideal transcription is feasible for any such
connection $\oab$ on the path-space bundle. To make contact with
the differential picture we have developed in earlier sections, we
should compare quasi-flat morphisms with parallel translation by
$\oab$ in the case where $B$ is such that $\oab$ reduces to ${\rm
  ev}_0^*A$ (for instance, if $A=\ovA$ and the fake curvature
$F^\ovA+\tau(B)$ vanishes); more precisely, the $h$ for quasi-flat
morphisms (taking all $z({\rm h})$ to be the identity) corresponds
to the quantity $h_0(1)$ specified through the differential
equation (\ref{difeqnhs}). It would be desirable to have a more
thorough relationship between the discrete structures and the
differential geometric constructions, even in the case when
$z(\cdot)$ is not the identity. We hope to address this in future
work.

\section{Concluding Remarks}\label{S:Conc}

We have constructed in (\ref{def:omegaAB}) a connection $\oab$ from
a connection $A$ on a principal $G$-bundle $P$ over $M$, and a
$2$-form $B$ taking values in the Lie algebra of a second structure
group $H$.  The connection $\oab$ lives on a bundle of
$\ovA$-horizontal paths, where $\ovA$ is another connection on $P$
which may be viewed as governing the gauge theoretic interaction
along each curve. Associated to each path $s\mapsto \Gamma_s$ of
paths, beginning with an initial path $\Gamma_0$ and ending in a
final path $\Gamma_1$ in $M$, is a parallel transport process by
the connection $\oab$. We have studied conditions (in Theorem
\ref{T:reparm}) under which this transport is `surface-determined',
that is, depends more on the surface $\Gamma$ swept out by the path
of paths than on the specific parametrization, given by $\Gamma$,
of this surface. We also described connections over the path space
of $M$ with values in the Lie algebra $LH$ obtained from the $\ovA$
and $B$.  We developed an `integrated' version, or a discrete
version, of this theory, which is most conveniently formulated in
terms of categories of quadrilateral diagrams. These diagrams, or
morphisms, arise from parallel transport by $\oab$ when $B$ has a
special form which makes the parallel transports
surface-determined.

Our results and constructions extend a body of literature ranging
from differential geometric investigations to category theoretic
ones.  We have developed both aspects, clarifying their
relationship.

\medskip

{\bf Acknowledgments} We are grateful to the anonymous referee for
useful comments and pointing us to the reference \cite{KS}.  Our
thanks to Urs Schreiber for the reference \cite{Stacey}.  We also
thank Swarnamoyee Priyajee Gupta for preparing some of the
figures. ANS acknowledges research supported from US NSF grant
DMS-0601141. AL acknowledges research support from Department of
Science and Technology, India under Project
No.~SR/S2/HEP-0006/2008.


\end{document}